\definecolor{darkblue}{rgb}{0,0,0.5}
\theoremstyle{plain}
\newtheorem{thm}{Theorem}
\theoremstyle{definition}
\newcommand{\eq}[1]{(\hyperref[eq:#1]{\ref*{eq:#1}})}
\renewcommand{\sec}[1]{\hyperref[sec:#1]{Section~\ref*{sec:#1}}}
\newcommand{\thrm}[1]{\hyperref[thrm:#1]{Theorem~\ref*{thrm:#1}}}
\newcommand{\lemm}[1]{\hyperref[lemm:#1]{Lemma~\ref*{lemm:#1}}}
\newcommand{\prop}[1]{\hyperref[prop:#1]{Proposition~\ref*{prop:#1}}}
\newcommand{\corr}[1]{\hyperref[corr:#1]{Corollary~\ref*{corr:#1}}}
\newcommand{\fig}[1]{\hyperref[fig:#1]{~\ref*{fig:#1}}}
\newcommand{\deff}[1]{\hyperref[deff:#1]{~\ref*{deff:#1}}}
\newcommand{\ma}{\mathcal{A}}
\newcommand{\mE}{\mathcal{E}}
\newcommand{\mU}{\mathcal{U}}
\newcommand{\mbI}{\mathbb{I}}
\DeclareMathAlphabet{\matheu}{U}{eus}{m}{n}
\DeclareMathOperator{\Tr}{Tr}
\newcommand{\ketbra}[2]{|{#1}\rangle\!\langle{#2}|}
\newcommand{\ba}{\begin{eqnarray}}
\newcommand{\ea}{\end{eqnarray}}
\newcommand{\bann}{\begin{eqnarray*}}
\newcommand{\eann}{\end{eqnarray*}}
\newcommand{\dm}[1]{\ketbra{#1}{#1}}
\DeclareFontFamily{U}{mathc}{}
\DeclareFontShape{U}{mathc}{m}{it}%
{<->s*[1.03] mathc10}{}
\DeclareMathAlphabet{\mathscr}{U}{mathc}{m}{it}
\newcolumntype{L}[1]{>{\raggedright}p{#1}}
\newcolumntype{C}[1]{>{\centering}p{#1}}
\newcolumntype{R}[1]{>{\raggedleft}p{#1}}
\newcolumntype{D}{>{\centering\arraybackslash}X}
\newcommand{\vb}{\vspace*{\baselineskip}}
\begin{document}

\title{Skew informations from an operational view via resource theory of asymmetry}

\author{Ryuji Takagi}
\email{rtakagi@mit.edu}
\affiliation{Center for Theoretical Physics and Department of Physics, Massachusetts Institute of Technology, Cambridge, Massachusetts 02139, USA}
  
\begin{abstract}
The Wigner-Yanase skew information was proposed to quantify the information contained in quantum states with respect to a conserved additive quantity, and it was later extended to the Wigner-Yanase-Dyson skew informations. Recently, the Wigner-Yanase-Dyson skew informations have been  recognized as valid resource measures for the resource theory of asymmetry, and their properties have been investigated from a resource-theoretic perspective. The Wigner-Yanse-Dyson skew informations have been further generalized to a class called metric-adjusted skew informations, and this general family of skew informations have also been found to be valid asymmetry monotones.  Here, we analyze this general family of the skew informations from an operational point of view by utilizing the fact that they are valid asymmetry resource monotones. We show that such an approach allows for clear physical meanings as well as simple proofs of some of the basic properties of the skew informations. Notably, we constructively prove that any type of skew information cannot be superadditive, where the violation of the superadditivity had been only known for a specific class of skew informations with numerical counterexamples. We further show a weaker version of superadditivity relation applicable to the general class of the skew informations, which proves a conjecture made for the Wigner-Yanase skew information as a special case. We finally discuss an application of our results for a situation where quantum clocks are distributed to multiple parties. 

\end{abstract}


\maketitle
\section{Introduction} 
Quantifying the information contents is a central theme in information theory. 
In classical information theory, the Shannon entropy serves as such an information-theoretic quantity. 
In quantum information theory, the corresponding information measure is the von Neumann entropy, and it successfully reflects the total information that a state in a system possesses. 
However, quantifying the information contents becomes subtle when the system has a certain symmetry and possesses a conserved quantity because, in such cases, some observables can be measured more easily than others as observed by Wigner, Araki, and Yanase~\cite{wigner1952physik,Araki1960way,Yanase1961way}.
It is then natural to consider an information measure that takes into account the relation between a state and the conserved quantity.
Motivated by this observation, Wigner and Yanase proposed the information measure called the Wigner-Yanase skew information, which measures the information contents contained in a quantum state with respect to a conserved additive quantity~\cite{Wigner1963}. 
This information-theoretic quantity was later generalized by Dyson into one-parameter family called Wigner-Yanase-Dyson skew informations, and their properties were intensively investigated~\cite{Wigner1963,Lieb1973advances,Lieb1973prl,Wehrl1978entropy,MONDAL2016speed}.

The connection was observed between the skew informations and Riemannian metrics studied in information geometry~\cite{Gibilisco2003,Hansen2008}.
The family of quantum Fisher informations were identified as the metrics that reflect the natural monotonicity property under information processing, and one-to-one correspondence between quantum Fisher informations and operator monotone functions was established~\cite{Morozova1991,PETZ199681monotone}.
Based on the observation on the relation between the skew informations (information-theoretic measure) and the quantum Fisher informations (information metric), Hansen proposed a general family of skew informations parameterized by operator monotone functions originated from the information geometry, which is known as metric-adjusted skew informations~\cite{Hansen2008}.
In this paper, we simply say ``skew informations'' to refer to the whole family of metric-adjusted skew informations.
This characterization of the skew informations by information geometry finds a further connection to the measures of asymmetry in the context of resource theories.

Resource theories are formal frameworks dealing with quantification and manipulation of intrinsic physical quantities, called resources, associated with given physical settings.
The generality of the resource-theoretic framework enables us to extract common features shared by a large class of the theories~\cite{Horodecki2013,brandao_2015,Liu2017,Gour2017,regula_2018,Anshu2018catalytic,lami_2018,takagi2018operational,li2018quantifying,takagi2019general,liu2019one} and also provides specific formalism depending on the interested physical quantities such as entanglement~\cite{plenio2007introduction,HOrodecki_review2009}, coherence~\cite{aberg2006quantifying,Baumgratz2014,Streltsov2017}, asymmetry~\cite{Gour2008,Marvian2016}, quantum thermodynamics~\cite{Brandao2013,Brandao_secondlaws2015}, non-Markovianity~\cite{wakakuwa2017operational}, magic~\cite{Veitch2014,howard_2017}, and non-Gaussianity~\cite{Genoni2008,Takagi2018,albarelli2018resource}. 
Resource theories are especially powerful when one is interested in separating precious resources and free objects, as well as in assessing operational significance of the resources~\cite{Piani2009,bae_2016,Piani2015,Napoli2016,Piani2016,bae2018more,takagi2018operational,takagi2019general}.

In particular, resource theory of asymmetry accounts for the capability of breaking the relevant symmetry possessed by the system. The reference frame that can break the symmetry is treated as resource, and this setting turns out to be especially relevant to quantum metrology~\cite{Giovannetti2006}.
Considering the close connection between metrology and the quantum Fisher informations, and the connection mentioned above between skew informations and the information geometry, it is not surprising that there is also a connection between the skew informations and the resource theory of asymmetry. 
Indeed, it has been found that the Wigner-Yanase-Dyson skew informations~\cite{Marvian2012, marvian2014extending} as well as the whole family of skew informations~\cite{Zhang2017} serve as valid asymmetry quantifiers, which give another operational aspect to this information-theoretic quantities. 

Due to the generality of the skew informations as well as restrictions imposed on them, investigation of mathematical properties of the skew informations usually requires highly involved mathematical techniques~\cite{Lieb1973advances,Hansen2008} that are not physically very intuitive, and it is hoped that the operational view stemming from the resource theory would provide another route that gets around with these difficulties.
Indeed, such an approach has been developed in Ref.\,\cite{Marvian2012}, which, in particular, showed the selective monotonicity of the Wigner-Yanase-Dyson skew informations by seeing them as asymmetry monotones and also observed that extensive quantities can be freely amplified by a covariant operation, which gave an intuitive operational explanation of the violation of the uncertainty relation in terms of the Wigner-Yanase skew information proposed in Refs.\,\cite{Luo2003uncertainty,luo2004informational,luo2004skew}.
It has been also shown that any functions of the Noether's conserved quantities cannot be asymmetry monotones (see also \cite{marvian2014extending}). 

Here, we employ an operational approach to analyze properties of the general class of skew informations and see that such an operational point of view allows for richer physical intuitions and simpler proofs of them.
Notably, we constructively show that any skew information cannot be superadditive. 
The superadditivity of the Wigner-Yanase skew information was listed as a desired property for the skew information to be an information measure, and Wigner and Yanase themselves proved this property for pure bipartite states~\cite{Wigner1963}. 
It had been widely believed that it would hold in general until counterexamples were found~\cite{Hansen2007,Seiringer2007,Cai2008}.
Although it would be interesting to investigate what property of the state contributes to the violation of the superadditivity, the previously shown counterexamples are purely numerical examples obtained by exhaustive computational search or semianalytical forms which fail to provide much physical insights. 
In this work, instead of taking a counterexample-based argument, we utilize the fact that the skew informations are asymmetry monotones and find that the violation of the superadditivity is a natural consequence from the resource-theoretic point of view.
We also propose and prove a weaker version of superadditivity relation that holds for any skew information, which extends the results in Refs.\,\cite{Cai2008,Cai2010}. 
We employ an operational argument to show that our inequality is optimal, which proves the conjecture proposed in Ref.\,\cite{Cai2008} as a special case. Our results are then applied to a physical situation where quantum clocks are distributed to multiple parties. 

This paper is organized as follows. 
In Sec.\,\ref{sec:resource theory} and Sec.\,\ref{sec:skew info}, we briefly review the resource theory of asymmetry and the skew informations. In Sec.\,\ref{sec:properties}, we discuss some properties of the skew informations for which operational approach turns out to be helpful. In Sec.\,\ref{sec:superadditivity}, after reviewing a protocol used in the following discussion, we prove the violation of superadditivity of the skew informations and propose a weak superadditivity relation, followed by an application of these results to distributed quantum clocks.
We finally conclude the paper in Sec.\,\ref{sec:conclusions}. 

\section{Resource theory of asymmetry}
\label{sec:resource theory}
Main building blocks of resource theories include the sets of free states and free operations, which represent free objects that are considered to be provided at no cost. 
The resource theory of asymmetry with group $G$ corresponds to the setting where one has free access to quantum states that are invariant (symmetric) under group action whereas states that can break the group symmetry, asymmetric states, are considered precious, and thus resources.
Formally, the state $\rho$ is called a symmetric state if $U_g\rho U_g^{\dagger}=\rho,\ \forall g\in G$ where $U_g$ is a unitary representation of the group element $g$.  
A relevant set of free operations are covariant operations $\mE$ satisfying the covariance condition: $\mE\circ \mU_g^I = \mU_g^{O}\circ \mE\ \ \forall g\in G$ where $\mU_g^X(\cdot)$ refers to the application of unitary representation of $g$ on the system $X$, and $X=I,O$ correspond to the input system and output system of $\mE$. 
It can be easily seen that the covariant operations cannot create any asymmetry from symmetric states by noting that $\mU_g^O(\mE(\sigma))=\mE(\mU_g^I(\sigma))=\mE(\sigma)$ for any symmetric state $\sigma$, which is desired for the covariant operations to be free operations.  

The other important concepts in resource-theoretic frameworks are resource quantifiers. They are also called resource monotones because reasonable resource quantifiers $R$ must satisfy the monotonicity condition: $R(\rho)\geq R(\mE(\rho)),\ \forall \rho$ for any free operation $\mE$. 

Although the formalism of the theory of asymmetry encompasses a general choice of group, here we focus on the $U(1)$ group whose unitary representation is labeled by a real number $t$ as $U_X(t)=\exp(iH_Xt)$ where $H_X$ is an observable defined on system $X$.
This choice of group represents the type of coherence relevant to quantum metrology~\cite{Giovannetti2006}, quantum thermodynamics~\cite{Streltsov2017, Lostaglio2015, lostaglio2015description}, and quantum correlation~\cite{Girolami2013local,Ye2017oneway,Sun_2017from,Wang2019lower}.
As we shall see in the next section, a general family of skew informations, which originated from the information-theoretic motivation, have been shown to be valid asymmetry monotones in this case. 

\section{Skew informations}
\label{sec:skew info}
Suppose the system possesses an additive conserved quantity whose observable is denoted by $H$. 
To quantify the information contained by quantum states with respect to the conserved quantity, Wigner and Yanase proposed the Wigner-Yanase skew information~\cite{Wigner1963}: 
\ba
  I^{WY}(\rho,H) &=& -\frac{1}{2}\Tr([\sqrt{\rho},H]^2)\\
  &=& \Tr(\rho H^2) - \Tr(\sqrt{\rho}H\sqrt{\rho}H).
\ea

Later, the Wigner-Yanase skew information was generalized by Dyson, 
\ba
  I^{WYD}_{\alpha}(\rho,H) &=& -\frac{1}{2}\Tr([\rho^{\alpha},H][\rho^{1-\alpha},H])\\
  &=& \Tr(\rho H^2) - \Tr(\rho^{\alpha}H\rho^{1-\alpha}H).
\ea
with $0<\alpha<1$. Note that it reduces to the Wigner-Yanase skew information when one takes $\alpha=1/2$.
Remarkably, it was shown that the Wigner-Yanase-Dyson skew informations are valid asymmetry monotones~\cite{Marvian2012,marvian2014extending}.

The skew information is closely related to the information geometry, in which metrics represent `how close' the neighboring probability distributions or quantum states are in terms of their parameters~\cite{amari2007methods}.
For classical probability distribution, imposing the contractivity under information processing uniquely identifies the metric as the classical Fisher information~\cite{cencov2000statistical}.
In quantum theory, the contractivity does not single out the unique metric, but rather a family of metrics, the quantum Fisher informations, are specified~\cite{Morozova1991,PETZ199681monotone}. 
Consider the model where the quantum state is parameterized by a single real number $t$. 
Then, the quantum Fisher informations have the form
\ba
 J^f(\rho_t) = \Tr\left[\frac{\partial \rho_t}{\partial t}c_f(L_\rho,R_\rho) \left(\frac{\partial \rho_t}{\partial t}\right)\right]
 \label{eq:fisher_general}
\ea
where $c_f(x,y):=\left[yf(xy^{-1})\right]^{-1}$ is the Morozova-Chentsov function~\cite{Morozova1991}, and $f$ is a standard operator monotonic function satisfying 
\begin{enumerate}
\item 
$0\leq A \leq B$ implies $f(A) \leq f(B)$ for any Hermitian operators $A,B$.
\item
 $f(x) = xf(1/x)$
\item 
 $f(1)=1$,
\end{enumerate}
and $L_\rho,R_\rho$ are the superoperators multiplying $\rho$ from left and right: $L_{\rho} X = \rho X$ and $R_{\rho} X = X\rho$.
Based on the observation on the connection between Wigner-Yanase skew information and the quantum Fisher information with unitary model~\cite{Gibilisco2003}, a general family of skew informations called metric-adjusted skew informations were introduced~\cite{Hansen2008}:  
\ba
 I^f(\rho,H) = \frac{f(0)}{2}\Tr\left[ (i[H,\rho])\,c_f(L_\rho,R_\rho)\left(i[H,\rho]\right)\right]. 
 \label{eq:skew_general}
\ea
In this paper, we call this family of metric-adjusted skew informations simply skew informations. 
It can be explicitly calculated for the state $\rho=\sum_j \lambda_j \dm{j}$ as
\ba
 I^f(\rho,H)=\frac{f(0)}{2}\sum_{ij}\frac{(\lambda_i-\lambda_j)^2}{\lambda_jf(\lambda_i/\lambda_j)}|\bra{i}H\ket{j}|^2.
 \label{eq:skew_explicit}
\ea
Recently, it has been shown that the skew informations can be experimentally determined from linear-response theory~\cite{Shitara2016}.  
Importantly, the Wigner-Yanase-Dyson skew informations (and automatically also the Wigner-Yanase skew information) are special kinds of skew informations, which are reconstructed by taking
\ba
 c_f(x,y) = \frac{1}{\alpha(1-\alpha)}\frac{(x^\alpha-y^\alpha)(x^{1-\alpha}-y^{1-\alpha})}{(x-y)^2}
\ea
for $0<\alpha<1$.

An important property of the skew informations is that all the skew informations are valid asymmetry monotones~\cite{Zhang2017}, namely,
\ba
 I^f(\rho,H)\geq I^f(\mE(\rho),H)
\ea
for any covariant operation $\mE$.
It comes from the generic contractive property of the quantum Fisher metrics and the covariance of covariant operations with group transformations. 
Furthermore, the skew informations are additive for product states~\cite{Hansen2008}:
\ba
  I^f(\rho_1\otimes \rho_2,H_{12}) = I^f(\rho_1, H_1) + I^f(\rho_2, H_2)
 \ea
where $H_{12}=H_1\otimes \mbI + \mbI\otimes H_2$.

The additivity for product states is rather peculiar feature of skew informations among other asymmetry monotones; there are asymmetry monotones that are not additive for product states (e.g. relative entropy of asymmetry~\cite{Gour2009relative}).
 
\section{Properties of skew informations as asymmetry monotones}
\label{sec:properties}

As reviewed in the last section, the family of skew informations are defined in quite a general fashion through the operator monotone function $f$.
Its mathematical treatment could be cumbersome when the function has a complicated form or when one tries to keep its generality. 
Therefore, showing properties applicable to the general class of the skew informations could be highly non-trivial and mathematically involved while giving not much physical intuition. 

However, we have also seen that the skew informations serve as asymmetry monotones. 
This is an attractive property since it is not only valid for general skew informations but also gives a relevant physical meaning.  
Here, we show that such a resource-theoretic point of view may greatly simplify the analysis of some of the properties and help to give physical intuition associated with them. 

\subsection{Monotonicity under the partial trace}

The skew informations of the subsystem are not greater than the skew informations of the total system.
 \ba
 I^f(\rho_{12}, H_{12}) \geq I^f(\rho_1, H_1)
 \label{eq:ptrace}
 \ea
where $H_{12}=H_1\otimes \mbI + \mbI\otimes H_2$.
 As asymmetry measures, this relation entails a natural physical meaning; if one throws away a subsystem, the capability of breaking the symmetry must not increase. 
 Eq.\,\eqref{eq:ptrace} is concisely obtained by monotonicity of the skew informations under covariant operations.
 The key observation is that the partial trace is a covariant operation because 
 \ba
  &&\Tr_2\left[\exp(-iH_{12}t)\rho_{12}\exp(iH_{12}t)\right] \\
  \ \ &=&   e^{-iH_1t}\Tr_2\left[(\mbI\otimes e^{ -iH_2t})\rho_{12}(\mbI\otimes e^{ iH_2t})\right]e^{iH_1t}\\
  &=&\exp(-iH_1t)\Tr_2[\rho_{12}] \exp(iH_1t).
 \ea
Also, the following property holds as a special case of $H_2=0$:
\ba
 I^f(\rho_{12},H_1\otimes \mbI) \geq I^f(\rho_1, H_1).
\ea
Note that these properties have been discussed in the literature for only special cases such as Wigner-Yanase-Dyson skew information by explicitly calculating the quantities~\cite{Lieb1973advances,Li2011}.

\subsection{Convexity/selective monotonicity}
The results in this subsection were obtained in Ref.\,\cite{Zhang2017}, but we repeat them here because these are nice examples for which operational perspectives are helpful. 
The convexity of the skew informations has been shown in Ref.\,\cite{Hansen2008} where the L\"owner's theory of the operator monotone functions and analytic functions was actively used. 
The convexity can be more intuitively seen by staring from the monotonicity of the skew informations under covariant operations.
We combine the monotonicity under partial trace described above and the following relation 
\ba
  I^f\left(\sum_k p_k\rho_k\otimes \dm{k},H\otimes \mbI \right) = \sum_k p_k I^f(\rho_k, H),
  \label{eq:register}
 \ea
 where $\ket{k}$ is an orthonormal basis, which can be straightforwardly obtained using \eqref{eq:skew_explicit}.
 By using \eq{register} and \eq{ptrace}, we concisely reach the convexity property: 
\ba
 \sum_k p_k I^f(\rho_k, H) &=& I^f\left(\sum_k p_k\rho_k\otimes \dm{k},H\otimes \mbI\right) \\
 &\geq & I^f\left(\sum_k p_k\rho_k,H\right)
\ea
where we used the monotonicity under partial trace in the inequality. 
 Eq.\,\eqref{eq:register} can be also used to show the selective monotonicity. 
The selective monotonicity is the property of a resource measure that it does not increase on average. 
 Namely, for any operation $\mE=\sum_j\mE_j$ where $\mE_j$ is a covariant completely-positive trace non-increasing map for any $j$, the selective monotonicity states that  
\ba
 I^f(\rho,H)\geq \sum_k p_k I^f(\sigma_k, H)
\ea
where $p_k=\Tr(\mE_k(\rho))$ and $\sigma_k=\mE_k(\rho)/p_k$. 
The selective monotonicity is not regarded as a necessary property for resource quantifiers in general, but it is a reasonable feature shared by many important resource monotones~\cite{regula_2018}. 
This can be easily shown by considering another covariant operation $\tilde{\mE}(\cdot)=\sum_k\mE_k(\cdot)\otimes \dm{k}$: 
\ba
 I^f(\rho, H) &\geq& I^f(\tilde{\mE}(\rho), H\otimes \mbI)\\
 &=& I^f\left(\sum_k p_k\sigma_k\otimes \dm{k},H\otimes \mbI\right) \\
 &= & \sum_k p_k I^f\left(\sigma_k,H\right).
\ea
where we used the monotonicity under covariant operations in the inequality and \eqref{eq:register} in the second equality. 
 
\subsection{Decrease under measurements not disturbing the conserved quantity}
In Ref.~\cite{Luo2007decreases}, the dynamics of Wigner-Yanase skew information under the measurement that does not disturb the conserved quantity have been investigated. 
Specifically, they considered the measurement operation $M(\cdot)=\sum_j E_j\cdot E_j^{\dagger}$ whose measurement operators commute with the observable corresponding to the conserved quantity: $[E_j,H]=0,\ \forall j$. 
This implies $\Tr[M(\rho)H]=\Tr[\rho H]$, so the expectation value of the conserved quantity is not disturbed.  
For such a measurement, they asked whether the Wigner-Yanase skew information would decrease under deterministic measurement 
\ba
I^{WY}(M(\rho),H)\leq I^{WY}(\rho,H),
\label{eq:meas_deterministic}
\ea
and under selective measurement 
\ba
\sum_j p_jI^{WY}(\sigma_j,H)\leq I^{WY}(\rho,H)
\label{eq:meas_selective}
\ea
where $p_j =\Tr[E_j\rho E_j^{\dagger}]$ and $\sigma_j=E_j\rho E_j^{\dagger}/p_j$.
They proved that \eqref{eq:meas_deterministic} holds in general and proved \eqref{eq:meas_selective} holds for two-dimensional systems, while they left the higher dimensional cases as a conjecture. 

By our resource-theoretic approach, \eqref{eq:meas_deterministic} and \eqref{eq:meas_selective} are naturally shown for general dimensions and any skew information (not only for the Wigner-Yanase skew information). 
To see this, note that the condition $[E_j,H]=0,\ \forall j$ implies that $M$ is a covariant operation and $E_j\cdot E_j^{\dagger}$ are covariant completely-positive trace non-increasing maps. Then, \eqref{eq:meas_deterministic}, \eqref{eq:meas_selective}, and their generalizations with $I^f$ replacing $I^{WY}$ hold true as immediate consequences from the monotonicity and selective monotonicity of the skew informations as asymmetry monotones.

\subsection{Invariance under covariant unitaries }
Another immediate property of the skew informations as asymmetry monotones is the invariance under unitary that commutes with the observable, namely,
\ba
 I^f(\rho, H) = I^f(U\rho U^{\dagger}, H) 
 \label{eq:free_unitary}
\ea
for $U$ such that $[U,H]=0$. This commuting property means that $U$ is a free unitary in the resource theory of asymmetry. 

Eq.\,\eqref{eq:free_unitary} is due to a generic feature of asymmetry monotones; if \eqref{eq:free_unitary} did not hold, it would imply that only one-way transformation between $\rho$ and $U\rho U^{\dagger}$ would be possible under free operations. 
Since $U$ is a free unitary, clearly $\rho\rightarrow U\rho U^{\dagger}$ is possible.
However, since $[U,H]=0$ implies $[U^{\dagger},H]=0$, $U^{\dagger}$ is also free unitary, so $U\rho U^{\dagger}\rightarrow \rho$ is also possible by free unitary, which is a contradiction.

\section{Superadditivity}
\label{sec:superadditivity}
Superadditivity of the skew informations refers to the property that the skew informations for total states are never less than the sum of local skew informations. Formally, we say that the supearadditivity holds if for any $n\in \mathbb{N}$, 
\ba
   I^f(\rho_{1\dots n},H_{1\dots n}) \geq \sum_{k=1}^n I^f(\rho_k,H_k)
   \label{eq:superadditive multipartite}
\ea
holds for any $\rho_{1\dots n}$ and $H_{1\dots n}$ with $H_{1\dots n}=\sum_{k=1}^n H_k\otimes \mbI_{\bar{k}}$ where $\rho_k=\Tr_{\bar{k}}\rho_{1\dots n}$ is the reduced state on the $k$\,th subsystem, and $\mbI_{\bar{k}}$ is the identity operator acting on the subsystems other than $k$\,th subsystem.
An equivalent form of this is the superadditivity for general bipartite states
\ba
   I^f(\rho_{12},H_{12}) \geq I^f(\rho_1,H_1) + I^f(\rho_2,H_2)
   \label{eq:superadditive bipartite}
\ea
for any $\rho_{12}$ and $H_{12}$ with $H_{12}=H_1\otimes \mbI + \mbI\otimes H_2$.
The equivalence can be seen by observing that the former implies the latter by taking $n=2$, and the latter imples the former by
\bann
  I^f(\rho_{1\dots n},H_{1\dots n}) &\geq& I^f(\rho_{1},H_{1})+I^f(\rho_{2\dots n},H_{2\dots n})\\
  &\geq& I^f(\rho_{1},H_{1})+I^f(\rho_{2},H_{2})+I^f(\rho_{3\dots n},H_{3\dots n})\\
  &\dots&\\
  &\geq& \sum_{k=1}^n I^f(\rho_k,H_k)
\eann
where we sequentially used \eqref{eq:superadditive bipartite} to obtain \eqref{eq:superadditive multipartite}.
Note that it also means that the violation of \eqref{eq:superadditive multipartite} implies the violation of \eqref{eq:superadditive bipartite}.

Here, we employ an operational argument to show that for any choice of $f$, there exists $n$ for which \eqref{eq:superadditive multipartite} is violated for some $\rho_{1\dots n}$ and $H_{1\dots n}$.
It automatically leads to the violation of \eqref{eq:superadditive bipartite} for any skew information. 
The idea is that one can construct a covariant operation which creates larger sum of the local asymmetry than the global asymmetry.
For such a covariant operation, we consider the protocol proposed by \AA{}berg~\cite{Aberg2014}, and use it as a tool to prove the violation of the superadditivity~\footnote{For this purpose, one can also consider other covariant operations~\cite{Marvian2012,marvian2018nogo}}. 
Surprisingly, we also observe that the violation of the Wigner-Yanase skew information already occurs after two applications of \AA{}berg's protocol on the bipartite system.  
We finally prove a weaker version of superadditivity with a constant multiplied to one side of the inequality. We show that our inequality is optimal in terms of the multiplied constant where its optimality can be concisely shown by an operational argument.
Our result proves the conjecture proposed in Ref.\,\cite{Cai2008} as a special case.

\subsection{\AA{}berg's protocol}
We briefly review the protocol proposed by \AA{}berg, which may implement an asymmetric operation by applying a covariant operation over system and ancillary system with an asymmetric resource state~\cite{Aberg2014}. 
The resource state in the ancillary system works as a `catalyst' in the sense that the sequential use of the resource state does not decrease the accuracy of the implementation of the desired operation on the system. 

Let $S$ be a $d$-level system with the Hamiltonian $H_S=\sum_{j=0}^{d-1}j\ketbra{j}{j}$ and $A$ be an ancillary system with the Hamiltonian $H_A=\sum_{j\in \mathbb{Z}}j\ketbra{j}{j}$.
Suppose that we would like to implement some unitary $U$ on $S$ but only have access to global unitaries on $SA$ that commutes with the total Hamiltonian $H_t = H_S \otimes \mbI + \mbI \otimes H_A$.
Consider the following unitary $V(U)$ on $SA$,
\ba
 V(U) &=& \sum_{j,k=0}^{d-1} \ketbra{j}{j}U\ketbra{k}{k} \otimes \Delta^{k-j}
\label{eq:aberg_unitary}
\ea
where $\Delta = \sum_{j\in \mathbb{Z}}\ketbra{j+1}{j}$ is the operator that shifts the energy level in the ancillary system by one unit. 
It is easy to see that $V(U)$ commutes with the total Hamiltonian as required; $[V(U),H_t]=0$. 
Then, the quantum operation $\mE$ applied on $S$ is written by 
\ba
 \mE(\rho) &=& \Tr_A[V(U)\rho\otimes \sigma V(U)^{\dagger}]\\
 &=& \sum_{jklm} \Tr[\Delta^{-l+m+k-j}\sigma] U_{jk}\rho_{kl} (U^\dagger)_{lm} \ketbra{j}{m}.
\ea
where $O_{jk}\coloneqq\bra{j}O\ket{k}$ for an operator $O$.
Because of the commuting condition and that the partial trace is covariant, the quantum operation $\mE$ is also covariant.
Noting that the ideal implementation of $U$ gives $U\rho U^\dagger = \sum_{jklm} U_{jk}\rho_{kl} (U^\dagger)_{lm}\ketbra{j}{m}$, it can be seen that the accuracy of the implementation only depends on the values of $\Tr[\Delta^a\sigma]$ for $a=-2(d-1),\dots,2(d-1)$, and if $\Tr(\Delta^a \sigma)\sim 1$ for this range of $a$, $\mE(\rho) \sim U\rho U^{\dagger}$.
One good choice for $\sigma$ is $\sigma =\dm{\eta^M_l}$ where $\ket{\eta^M_l}=\frac{1}{\sqrt{M}}\sum_{i=0}^{M-1} \ket{i+l}$ for some $l$ and $M$. By taking $M$ large, $\mE(\rho)$ converges to $U\rho U^{\dagger}$.

What is remarkable about this protocol is that it is perfectly repeatable in the sense that one can reuse the resource state again and again without degrading the quality of the implementation of $U$. 
This can be seen by considering the reduced state on the ancillary system $A$ after one application of the protocol
\ba
 \sigma'=\Tr_S [V(U)\rho\otimes \sigma V(U)^{\dagger}].
\ea
It is straightforward to confirm that 
\ba
 \Tr[\sigma'\Delta^a] = \Tr[\sigma\Delta^a],\ \forall a.
 \label{eq:catalytic}
\ea
Since the effect of channel $\mE$ is only determined by $\Tr(\sigma \Delta^a)$, $\mE(\rho)$ remains the same although the reduced resource state gets changed. This perfect repeatability plays an important role in the following discussion.
Note also that one could introduce a lower bound for the spectrum of $H_A$ above to ensure the existence of the ground energy while keeping the perfect repeatability of the protocol by constantly supplying sufficient energy into the
ancillary system~\cite{Aberg2014}. Since such energy supply does not bring any asymmetry into the system, the following argument on the violation of superadditivity still holds.

\subsection{Violation of superadditivity}
Consider $N$ two-level systems $S_1,\dots,S_N$ and infinite-dimensional ancillary system $A$ with Hamiltonians $H_{S_i}=\dm{1}$ and $H_A=\sum_{j\in \mathbb{Z}} j\dm{j}$, and the initial state of the form
\begin{equation}
    \tilde{\rho}_0= \dm{0}^{\otimes N} \otimes \sigma_A
    \label{eq:aberg_initial}
\end{equation}
where $\sigma_A$ is an asymmetric resource state on $A$.
We choose this form for the sake of discussion, but the following argument can be easily extended to more general choice of systems and initial state.
Imagine that we sequentially apply the \AA{}berg's unitary \eqref{eq:aberg_unitary} on $S_m$ and $A$ at the $m$\,th step of the protocol. 
Let $V_{S_m A}$ be the unitary applied over $S_m$ and $A$, and define $\mathcal{V}_m(\cdot)=V_{S_m A}\cdot V_{S_m A}^{\dagger}$, and $\mathcal{V}_{1\dots k}=\mathcal{V}_k \circ \mathcal{V}_{k-1}\circ \cdots \circ \mathcal{V}_1$. 
Then, the reduced state on $S_1\dots S_N$ after $N$ applications of the unitary is 
\ba
 \tilde{\rho}_{f,S}= \Tr_A [\mathcal{V}_{1\dots N}(\tilde{\rho}_0)]=:\mathcal{A}(\tilde{\rho}_0)
 \label{eq:aberg_final}
\ea
where we defined $\mathcal{A}:= \Tr_A\circ \mathcal{V}_{1\dots N}$.
An important observation is that because of the commutation relation  $[V(U),H_t]=0$, $\mathcal{A}$ is a covariant operation, i.e. 
\ba
 \mathcal{A}(e^{-iH_t t}\rho e^{iH_t t}) = e^{-iH_S t} \mathcal{A}(\rho) e^{-iH_S t} 
\ea
where 
\ba
H_t = H_{S}\otimes \mbI + \mbI^{\otimes n}\otimes H_A,\ \ H_{S} = \sum_{j=1}^N H_{S_j}\otimes \mbI_{\bar{S}_j}. 
\ea

Since $\mathcal{A}$ is covariant, asymmetry of $\tilde{\rho}_{f,S}$ must be no larger than the asymmetry of $\tilde{\rho}_0$, and it holds true for any $N$ one chooses. 
Let us now take $I^f$ as an asymmetry monotone. Then, this observation leads to that $I^f(\tilde{\rho}_{f,S})\leq I^f(\tilde{\rho}_0)$. 
Furthermore, using the additivity of $I^f$ for product states and the fact that $\dm{0}$ is symmetric, we get $I^f(\tilde{\rho}_0)=I^f(\sigma_A)$.  
It implies that if we start with the resource state $\sigma_A$ with finite asymmetry (skew information), the asymmetry of the system is upper bounded by the initial asymmetry contained in $\sigma_A$.
However, if we take $U$ to be something that can create asymmetry, such as $U=\frac{1}{\sqrt{2}}\begin{pmatrix}
1 & 1 \\
1 & -1
\end{pmatrix}$, the reduced state of each subsystem gains non-zero asymmetry. Since the \AA{}berg's protocol is perfectly repeatable, one can take $N$ large enough that the sum of local asymmetry exceeds the initial asymmetry.   
This argument shows the following theorem on the violation of the superadditivity.
 \begin{thm}
  Any skew information $I^f$ cannot be superadditive.
  \label{thm:violation of superadditivity}
 \end{thm}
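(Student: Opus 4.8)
The plan is to argue by contradiction, turning the \AA{}berg protocol reviewed above into a covariant operation whose very covariance clashes with superadditivity. The tension I want to exploit is this: on the one hand, monotonicity under the covariant map $\mathcal{A}$ caps the \emph{total} asymmetry of the output by the asymmetry of a single, fixed resource state; on the other hand, additivity together with the perfect repeatability of the protocol lets the \emph{sum} of the local asymmetries grow without bound in the number of subsystems $N$. Once these two facts coexist, superadditivity is impossible.

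Concretely, first I would fix the resource state $\sigma_A$ (for instance $\sigma_A=\dm{\eta^M_l}$ with $M$ large) and run the protocol as in \eqref{eq:aberg_final}, producing $\tilde{\rho}_{f,S}=\mathcal{A}(\tilde{\rho}_0)$. Since $\mathcal{A}$ is covariant and $I^f$ is an asymmetry monotone, $I^f(\tilde{\rho}_{f,S},H_S)\le I^f(\tilde{\rho}_0,H_t)$; using additivity on the product $\tilde{\rho}_0=\dm{0}^{\otimes N}\otimes\sigma_A$ together with the fact that each $\dm{0}$ is symmetric (hence carries zero skew information), this collapses to the fixed, finite bound $I^f(\tilde{\rho}_{f,S},H_S)\le I^f(\sigma_A,H_A)=:C$, which is \emph{independent of $N$}.

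Next I would choose the implemented unitary to be the Hadamard $U=\frac{1}{\sqrt{2}}\left(\begin{smallmatrix}1&1\\1&-1\end{smallmatrix}\right)$, so that the ideal target on each two-level system is the pure state $U\dm{0}U^{\dagger}$. For pure states \eqref{eq:skew_explicit} reduces, for every $f$, to the variance of $H_{S_k}$, which here equals $1/4>0$; by continuity of $I^f$ in the finite-dimensional subsystem the approximate output actually produced by the protocol then has local skew information bounded below by some $\epsilon>0$. The crucial point is that the catalytic identity \eqref{eq:catalytic} guarantees that the reduced resource state feeding each step reproduces the same values $\Tr(\Delta^a\sigma)$, so every subsystem emerges in the \emph{same} reduced state at the \emph{same} fixed accuracy; hence the lower bound $I^f(\tilde{\rho}_{f,S_k},H_{S_k})\ge\epsilon$ is uniform in $k$ and, importantly, in $N$.

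To finish, suppose superadditivity \eqref{eq:superadditive multipartite} held. Then $C\ge I^f(\tilde{\rho}_{f,S},H_S)\ge\sum_{k=1}^N I^f(\tilde{\rho}_{f,S_k},H_{S_k})\ge N\epsilon$, which fails as soon as $N>C/\epsilon$, a contradiction; by the equivalence of \eqref{eq:superadditive multipartite} and \eqref{eq:superadditive bipartite} the bipartite form is violated as well. The step I expect to be the real obstacle is securing the $N$-independent constants simultaneously: the resource state, and thus $C$, must stay fixed while $N\to\infty$, and the per-site gain $\epsilon$ must not decay with $N$. Both hinge on the perfect repeatability \eqref{eq:catalytic}, since without it the ancilla would degrade, inflating $C$ or shrinking $\epsilon$. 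So the careful part is verifying that the catalytic property genuinely decouples the accuracy of each local implementation from how many times the protocol has already been applied.
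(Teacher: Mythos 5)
Your proposal is correct and follows essentially the same route as the paper's own proof: bound the total asymmetry by $I^f(\sigma_A,H_A)$ via additivity and monotonicity under the covariant map $\mathcal{A}$, use the perfect repeatability \eqref{eq:catalytic} to get identical reduced states with an $N$-independent positive local skew information, and take $N$ large to contradict \eqref{eq:superadditive multipartite}. The only cosmetic difference is your appeal to continuity of $I^f$ near the pure target state, which can be sidestepped (as the paper implicitly does) by fixing a finite $M$ and reading off positivity of $I^f(\tilde{\rho}_{f,S_1},H_{S_1})$ directly from \eqref{eq:skew_explicit}, since the reduced state has distinct eigenvalues and $H$ has a nonzero off-diagonal matrix element in its eigenbasis.
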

 
\begin{proof}
We explicitly construct the state violating the superadditivity by the above protocol. 
Consider the asymmetric resource state $\sigma_A$ with $I^f(\sigma_A)$ being finite.
Let $\tilde{\rho}_0$ be the initial state defined by \eqref{eq:aberg_initial} and $\tilde{\rho}_{f,S}$ be the final state obtained by \eqref{eq:aberg_final}. 
Then,
\ba
I^f(\sigma_A,H_A)=I^f(\tilde{\rho}_0, H_t)&\geq& I^f(\ma(\tilde{\rho}_0),H_{S})\\
&=&I^f(\tilde{\rho}_{f,S},H_{S})
\ea
where in the first equality, we used the additivity of the skew information for product states, and in the inequality, we used the monotonic property of the skew information as an asymmetry measure under covariant operations. 
Suppose, to the contrary, superadditivity holds. Then we must have
\ba
I^f(\tilde{\rho}_{f,S},H_{S})&\geq& \sum_{k=1}^N I^f(\tilde{\rho}_{f,S_k}, H_{S_k})\\
&=&NI^f(\tilde{\rho}_{f,S_1},H_{S_1})
\ea
where $\tilde{\rho}_{f,S_k}=\Tr_{\bar{S}_k}[\tilde{\rho}_{f,S}]$ is the reduced state on $S_k$.
The equality is because $\tilde{\rho}_{f,S_i}=\tilde{\rho}_{f,S_j}$ for all $i,j$ due to the perfect repeatability of the \AA{}berg's protocol.
However, since one can take $U$ such that $I^f(\tilde{\rho}_{f,S_1},H_S)>0$, $N$ can be taken large enough so that $NI^f(\tilde{\rho}_{f,S_1},H_S)>I^f(\sigma_A,H_A)$, which is a contraction.  
Hence, $I^f$ cannot be superadditive. 
\end{proof}

The above construction utilizes the fact that the \AA{}berg's protocol is perfectly repeatable, so we needed to consider multipartite setting where $N$ could be large. 
Interestingly, we find that application of the \AA{}berg's protocol on two qubits already shows the violation of the superadditivity for the Wigner-Yanase skew information. Here, we present a family of such bipartite states.

Let us set $U=\frac{1}{\sqrt{2}}\begin{pmatrix}
1 & 1 \\
1 & -1
\end{pmatrix}$ in \eq{aberg_unitary}  and   $\sigma_A=\ketbra{\eta_M^l}{\eta_M^l}$ where 
\begin{equation}
    \ket{\eta_M^l}=\frac{1}{\sqrt{M}}\sum_{i=0}^{M-1} \ket{i+l}.
\end{equation}
for some $M$ and $l$.
Roughly, $M$ specifies the amount of asymmetry in the ancillary system, and $l$ refers to the `position' of the coherence in the spectrum. 
When $N=2$, the total state after the protocol is
\begin{equation}
 \ket{\tilde{\psi}^{(2)}}=\frac{1}{2}(\ket{00}\ket{\eta_M^{l}}+\ket{01}\ket{\eta_M^{l-1}}+\ket{10}\ket{\eta_M^{l-1}}+\ket{11}\ket{\eta_M^{l-2}}).
\end{equation}

Taking the partial trace over $A$, 
\begin{eqnarray}
 \tilde{\rho}_{S_1S_2}&=&\Tr_A  \ketbra{\tilde{\psi}^{(2)}}{\tilde{\psi}^{(2)}} \nonumber \\
 &=& \frac{1}{4}
 \begin{pmatrix}
  1 && \alpha_1 && \alpha_1 && \alpha_2 \\
  \alpha_1 && 1 && 1 && \alpha_1 \\
  \alpha_1 && 1 && 1 && \alpha_1 \\
  \alpha_2 && \alpha_1 && \alpha_1 && 1
 \end{pmatrix}
\end{eqnarray}
 where $\alpha_i=\braket{\eta_M^{l-i}|\eta_M^{l}}=\frac{M-i}{M}$.
It gives 
\ba
 \tilde{\rho}_{S_1} = \tilde{\rho}_{S_2} = \frac{1}{2}\begin{pmatrix}
  1 & \frac{\alpha_1}{2} \\
  \frac{\alpha_1}{2} & 1
 \end{pmatrix}
\ea

Note that $\alpha_i$ only depends on $M$.
Fig.\,\ref{fig:superadditivity} shows the relation between $M$ and the degree of superadditivity $\Delta I^{WY} = I^{WY}(\tilde{\rho}_{12},H_{S_1S_2}) - \left( I^{WY}(\tilde{\rho}_{S_1},H_{S_1}) + I^{WY}(\tilde{\rho}_{S_2},H_{S_2}) \right)$. 
It can be seen that the superadditivity is violated for all $M\geq2$.

One can also see that the maximum violation occurs at $M = 4$, which gives $\alpha_1 = \frac{3}{4}$, $\alpha_2 = \frac{1}{2}$, and approaches 0 as $M$ increases. This asymptotic behavior is expected because at the limit of $M \to \infty$, the implementation of $U$ on each subsystem becomes perfect, and it brings $\tilde{\rho}_{S_1S_2} \to \ket{+}\ket{+}$. 
Since the skew information is additive for product state, $\Delta I^{WY} \to 0$ in this limit. 
It would be interesting to look into what contributes to the large violation; we leave the thorough analysis for future work.  

\begin{figure}[htbp]
 \includegraphics[scale = 0.4]{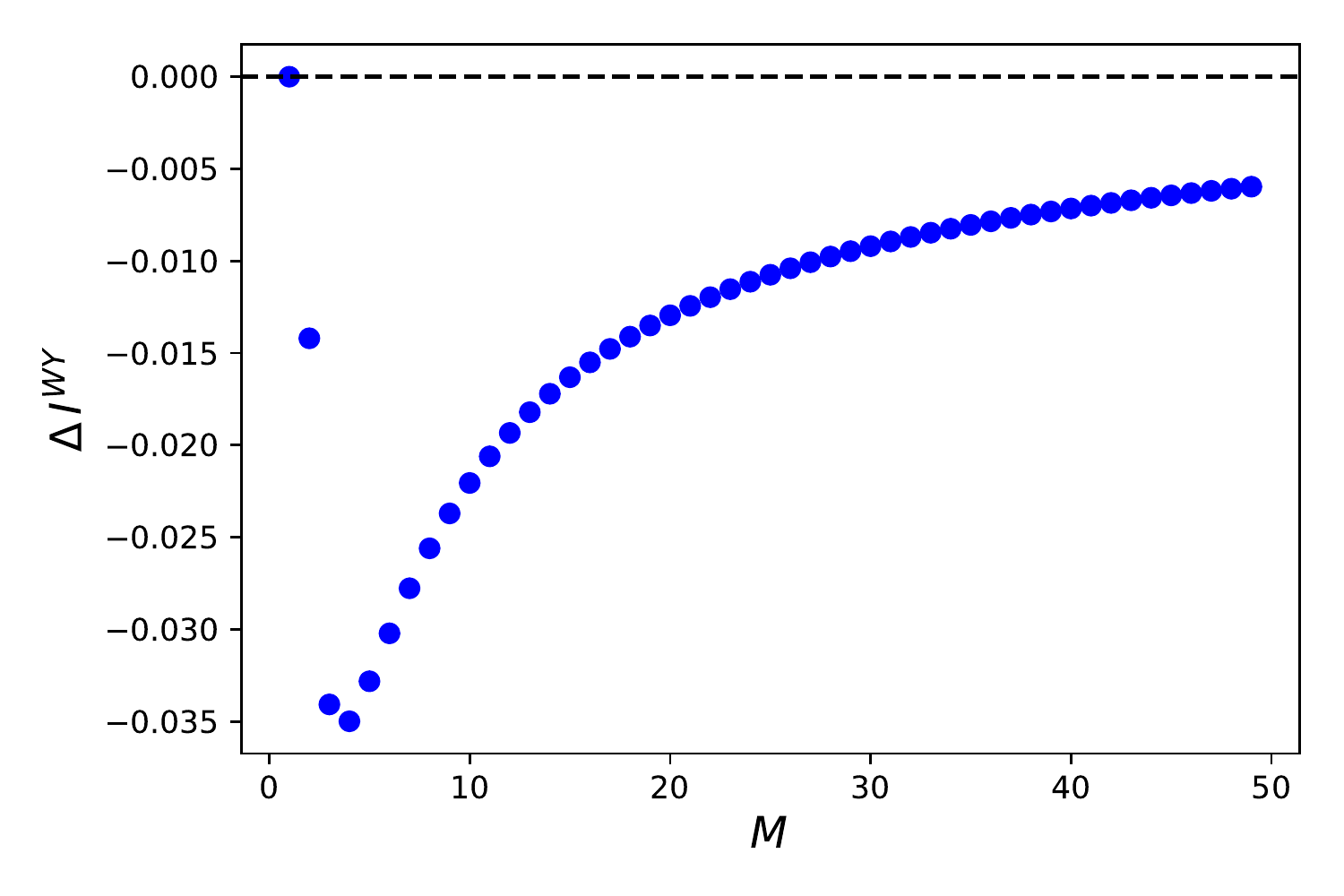}
 \caption{Relation between $M$ and $\Delta I^{WY} = I^{WY}(\tilde{\rho}_{S_1S_2},H_{S_1S_2}) - \left( I^{WY}(\tilde{\rho}_{S_1},H_{S_1}) + I^{WY}(\tilde{\rho}_{S_2},H_{S_2}) \right)$. Each data point corresponds to $\Delta I^{WY}$ for $M\in \mathbb{N},\ M\geq 1$.}
  \label{fig:superadditivity}
\end{figure}

\subsection{Weak superadditivity}
 Although the superadditivity does not hold in general, one can still ask whether some weaker version of superadditivity holds. 
 It has been shown that the following weak superadditivity for the Wigner-Yanase skew information holds~\cite{Cai2008}:
 \ba
 I^{WY}(\rho_{12},H_{12})\geq \frac{1}{2}(I^{WY}(\rho_1,H_1) + I^{WY}(\rho_2,H_2))
 \label{eq:weak_bi}
 \ea
 where $H_{12} = H_1 \otimes \mbI + \mbI \otimes H_2$.
They also conjectured that the factor $\frac{1}{2}$ is optimal in the sense that the largest constant $\beta$ such that  
 \ba
 I^{WY}(\rho_{12},H_{12})\geq \beta(I^{WY}(\rho_1,H_1) + I^{WY}(\rho_2,H_2))
 \ea
 holds for any $\rho_{12}$ and $H_{12}$ is $\frac{1}{2}$.
 
The following theorem further extends \eqref{eq:weak_bi} to general skew informations defined on any number of subsystems and identifies the optimal constant associated with it. 
The above conjecture is shown to be true as a special case of this result. 
We find that a similar resource-theoretic approach again provides a concise operational proof.

\begin{thm}
For any $\rho_{1\dots k}$ and $H_{1 2 \dots k} = \sum_j H_j\otimes \mbI_{\bar{j}}$, 
 \ba
 I^f(\rho_{1\dots k},H_{1 2 \dots k})\geq \frac{1}{k}\sum_{j=1}^{k} I^f(\rho_j,H_{j})
 \ea
 holds. Moreover, $\frac{1}{k}$ is the maximum constant for $\beta(k)$ such that
 \ba
 I^f(\rho_{1\dots k},H_{1 2 \dots k})\geq \beta(k)\sum_{j=1}^{k} I^f(\rho_j,H_{j})
 \ea
 holds for any $\rho_{1\dots k}$ and $H_{12\dots k}$.
 \label{thm:weak superadditivity}
\end{thm}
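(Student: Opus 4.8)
The plan is to treat the two claims separately: the inequality will follow almost immediately from monotonicity under partial trace, whereas establishing that $1/k$ is optimal will require an operational construction.

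For the inequality, I would first observe that for each fixed $j$ the total Hamiltonian splits into the bipartite form $H_{12\dots k}=H_j\otimes\mbI_{\bar j}+\mbI_j\otimes H_{\bar j}$ with $H_{\bar j}=\sum_{i\neq j}H_i\otimes\mbI_{\overline{ij}}$, which is exactly the setting to which monotonicity under partial trace, \eqref{eq:ptrace}, applies. Since tracing out $\bar j$ is a covariant operation, this gives $I^f(\rho_{1\dots k},H_{12\dots k})\geq I^f(\rho_j,H_j)$ for every $j$. A quantity that dominates each entry of a list also dominates their average, so $I^f(\rho_{1\dots k},H_{12\dots k})\geq \max_j I^f(\rho_j,H_j)\geq \frac1k\sum_{j=1}^{k} I^f(\rho_j,H_j)$, which is the claimed bound. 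This part is routine once the partial-trace monotonicity is granted.

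For optimality I would show that no constant exceeding $1/k$ can hold by exhibiting a family of states whose ratio $I^f(\rho_{1\dots k},H_{12\dots k})/\sum_j I^f(\rho_j,H_j)$ tends to $1/k$. The natural source of such states is the repeatable protocol reviewed above: distributing a single asymmetric reference $\sigma_A$ covariantly to $k$ parties, where perfect repeatability \eqref{eq:catalytic} forces all $k$ marginals to coincide, so that the denominator collapses to $k\,I^f(\rho_1,H_1)$. The target is then to drive the global skew information down to a single marginal's value, i.e.\ to saturate the very partial-trace bound used in the first part: if $I^f(\rho_{1\dots k},H_{12\dots k})\to I^f(\rho_1,H_1)$ while the marginals stay equal and nonzero, the ratio tends to $1/k$. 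Covariance caps the global quantity at $I^f(\sigma_A,H_A)$ from above, while the first part bounds it by one marginal from below, so the construction amounts to arranging the parties so that each one individually carries essentially all the asymmetry of the reference while their coherences remain locked together through the shared reference, preventing the global asymmetry from accumulating.

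The hard part will be precisely this saturation. The tension is that any coherence present in a marginal $\rho_j$ necessarily links eigenstates of $H_j$ with distinct eigenvalues, hence computational-basis states of distinct \emph{total} energy, and therefore also feeds into the global skew information; the global quantity can never fall strictly below a single marginal, and the delicate point is to realize states approaching equality rather than merely satisfying the strict inequality. I expect to resolve this by taking each ``party'' to be a large block of probe systems and letting the block size grow, so that each block collectively reads out the reference phase and acquires asymmetry approaching $I^f(\sigma_A,H_A)$, while the global asymmetry stays pinned at $I^f(\sigma_A,H_A)$ by monotonicity; the step requiring the most care is verifying that a fixed fraction of the probes captures essentially the full reference asymmetry in this limit. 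Specializing the resulting family to $I^{WY}$ and $k=2$ then recovers the conjectured optimality of the constant $\tfrac12$.
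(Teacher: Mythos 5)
Your first half is correct and is exactly the paper's argument: tracing out $\bar j$ is covariant, so $I^f(\rho_{1\dots k},H_{12\dots k})\geq I^f(\rho_j,H_j)$ for each $j$, and averaging over $j$ gives the $1/k$ bound. The problem is in your optimality argument, where there is a genuine gap at precisely the step you flag as ``requiring the most care.'' Your plan needs a family of $k$-partite states with all $k$ marginals equal, nonzero, and each carrying skew information approaching the \emph{global} value, so that the ratio tends to $1/k$. Such states do exist (the theorem itself implies it), but your proposed construction does not deliver them as stated: you assert that each of $k$ large blocks, covariantly coupled to a single reference $\sigma_A$, acquires marginal asymmetry approaching the full $I^f(\sigma_A,H_A)$ \emph{simultaneously for all $k$ blocks}. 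After tracing out the reference, a block's marginal is a dephased many-body state whose coherences at total-energy difference $a$ are damped by $\Tr[\Delta^a\sigma_A]$, and establishing that its metric-adjusted skew information tends to $I^f(\sigma_A,H_A)$ --- for \emph{every} operator monotone $f$, not just the Wigner--Yanase case where one might compute --- is an asymptotic spectral analysis of a correlated mixed state that you have not carried out and that is at least as hard as the original problem. (Note also that on pure states all $I^f$ coincide with the variance, but your block marginals are mixed, so the limit genuinely depends on $f$.) As it stands, the decisive claim is an expectation, not a proof.

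The paper's proof of optimality avoids this difficulty entirely by a bootstrapping argument that never requires states with ratio near $1/k$. Suppose some $\beta'>1/k$ worked for all states and Hamiltonians. Taking $N=k^n$ unit subsystems and applying the hypothetical inequality recursively --- first to $k$ blocks of $k^{n-1}$ units, then within each block, and so on for $n=\log_k N$ levels --- yields
\begin{equation*}
I^f(\rho_{1\dots N},H_{12\dots N})\;\geq\;(\beta')^{n}\sum_{j=1}^{N}I^f(\rho_j,H_j)\;=\;N^{-\log_k(1/\beta')}\sum_{j=1}^{N}I^f(\rho_j,H_j),
\end{equation*}
with exponent $\log_k(1/\beta')<1$. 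This converts a constant-factor assumption at fixed $k$ into a growth-rate constraint as $N\to\infty$: a bounded global skew information would force the sum of local skew informations to grow sublinearly in $N$. The \AA{}berg family then gives the contradiction cheaply, since perfect repeatability makes all $N$ marginals identical with nonzero skew information (sum $\propto N$) while covariance pins the global value at $I^f(\sigma_A,H_A)=\mathcal{O}(1)$ --- a family whose ratio tends to $0$ like $1/N$, far from $1/k$, yet sufficient. If you want to salvage your direct route, you would need to actually prove the saturation statement for general $f$; otherwise the recursive argument is the way to close the proof.
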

\begin{proof}
 Since partial trace is a covariant operation, monotonicity of the skew information under covariant operations implies 
 \ba
  \forall i,\ \ I^f(\rho_{1\dots k},H_{1 2 \dots k}) \geq I^f(\rho_i,H_i).
 \ea
By summing over $i$ and dividing both sides by $k$, we obtain
 \ba
 I^f(\rho_{1\dots k},H_{1 2 \dots k})\geq \frac{1}{k}\sum_{j=1}^{k} I^f(\rho_j,H_{j}).
 \label{eq:weak_multi}
 \ea
 We next show that the factor $1/k$ is optimal for any $k$.
 Suppose there exists $\beta'>\frac{1}{k}$ such that 
 \ba
 I^f(\rho_{1\dots k},H_{1 2 \dots k})\geq \beta'\sum_{j=1}^{k} I^f(\rho_j,H_{j})
 \label{eq:weak_super_fault}
 \ea
 holds for any $\rho_{1\dots k}$ and $H_{12\dots k}$ for some $k$.
 Let $N=k^n$ for some positive integer $n$, and $\rho_{1\dots N}$ and $H_{12\dots N}$ be state and Hamiltonian defined on the $N$ subsystems. We call these $N$ subsystems unit subsystems.
 Since \eqref{eq:weak_super_fault} should hold for any choice of state and Hamiltonian, we apply it to $k$ subsystems each of which consists of $k^{n-1}$ unit subsystems, which gives
 \ba
 I^f(\rho_{1\dots N},H_{1 2 \dots N})&\geq& \beta'\sum_{j=1}^{k} I^f(\rho_{1\dots k^{n-1}}^j,H_{1\dots k^{n-1}}^j)
 \ea
 where $\rho_{1\dots k^{n-1}}^j$ and $H_{1\dots k^{n-1}}^j$ denote the reduced state and Hamiltonian defined on the $j$\,th larger subsystem.
 We keep applying \eqref{eq:weak_super_fault} by dividing each subsystem into $k$ equal-sized subsystems until it reaches the unit subsystem. $n=\log_k N$ levels of division completes the task, and we end up with
 \ba
 I^f(\rho_{1\dots N},H_{1 2 \dots N})&\geq& (\beta')^{n}\sum_{j=1}^{N} I^f(\rho_j,H_{j})\\
 &=& \left(\frac{1}{N}\right)^{\log_k(1/\beta')}\sum_{j=1}^{N} I^f(\rho_j,H_{j}).
 \ea
 Note that $\log_k (1/\beta')<1$ due to the assumption that $\beta' > 1/k$. It implies that if the left hand side is finite value that does not depend on $N$, the sum of local asymmetry on the right hand side must grow sublinearly with $N$ because otherwise one could take sufficiently large $N$ that violates the inequality. 
 However, there exists a covariant operation that constructs a state with $\sum_{j=1}^{N} I^f(\rho_j,H_{j}) \propto N$ while $I^f(\rho_{S_1\dots N},H_{1 2 \dots N}) = \mathcal{O}(1)$ for any $N$. \AA{}berg's protocol is such an example since all the smallest subsystems have the identical marginal states because of the perfect repeatability while the final global asymmetry is finite because of the monotonicity of the skew informations and that the whole protocol is a covariant operation. 
 This is a contradiction, and hence we must have $\beta(k)\leq 1/k$. The equality is achieved due to \eqref{eq:weak_multi}.
\end{proof}

\subsection{Distributed quantum clocks}
Theorem \ref{thm:violation of superadditivity} and \ref{thm:weak superadditivity} provide an interesting implication for the situation where quantum clocks are distributed to multiple parties. 
Suppose that $k$ parties $A_1,A_2,\dots,A_k$ share some number of copies of state $\rho_{1\dots k}$ while each party only has access to their reduced state and does not know the description of the global state $\rho_{1\dots k}$. 
Assume also that they share a limited amount of entanglement among each other, which only enables them to send a limited number of qubits by quantum teleportation although they can freely make classical communication.
This is a situation relevant to the setups such as quantum network and distributed quantum computation~\cite{Duan2010qnetwork,Pirker2018qnetwork}.

From the perspective that the skew informations are asymmetry monotones, it is natural to see that they serve as resources for metrological tasks~\cite{Marvian2016}. In particular, when the conserved quantity is the Hamiltonian, the skew informations may be seen as relevant quantifiers for usefulness as quantum clocks~\cite{Janzing2003clock}.
Suppose that $A_1$ desires to possess a quantum clock with the amount of skew information $I^f_{\rm th}>I^f(\rho_1,H_1)$. In such a situation, $A_1$ could ask the other parties to send their states via quantum teleportation, but $A_1$ would like to make sure that it will be indeed possible to achieve the desired level of asymmetry by doing so because otherwise the precious entanglement will be wasted. 
To this end, suppose $A_1$ asks the other parties to measure the skew information of their own reduced state (by, for instance, a method provided by Ref.~\cite{Shitara2016}) and report it back by classical communication. $A_1$ then tries to infer the total skew information she would obtain $I^f(\rho_{1\dots k},H_{1,\dots,k})$ by reported values $I_f(\rho_j,H_j),\ j=2,\dots,k$.

Theorem \ref{thm:violation of superadditivity} warns $A_1$ not to make a naive decision in which she asks the other parties to send their states when $\sum_{j=1}^k I^f(\rho_j,H_j)\geq I^f_{\rm th}$ because it may be the case that $\rho_{1\dots k}$ significantly violates the superadditivity relation with $I^f(\rho_{1\dots k},H_{1,\dots,k}) < I^f_{\rm th}$. 
On the other hand, Theorem \ref{thm:weak superadditivity} ensures that if $A_1$ asks the other parties to send their states only when $\frac{1}{k}\sum_{j=1}^k I^f(\rho_j,H_j) \geq I^f_{\rm th}$, she will certainly obtain the enough amount of asymmetry. 
Moreover, it is the best possible she can do because $1/k$ is the maximum constant to ensure that $I^f(\rho_{1\dots k},H_{1,\dots,k}) \geq I^f_{\rm th}$ holds as shown in Theorem \ref{thm:weak superadditivity}.

\section{Conclusions}
\label{sec:conclusions}
We analyzed properties of the general family of skew informations from operational perspectives in the context of resource theory of asymmetry. 
We showed that such operational approach may give clearer physical meanings as well as simpler proofs of some of the properties of the skew informations.
We proved the violation of superadditivity for general family of skew informations by constructing a covariant operation that creates a state violating the superadditivity. 
Our proof has high contrast to the previous numerical-based approaches in that it can be applicable to general family of skew informations and also suggests a way of constructing the states showing the violation.
We observed that it is indeed a good ``violation producer'' by looking at the bipartite states produced by the protocol, which allowed us to provide a family of bipartite states violating the superadditivity of the Wigner-Yanase skew information. 
We also showed a weak superadditivity relation and proved the optimality of the inequality by an operational approach, which encompasses the previously postulated conjecture as a special case. 
We finally discussed an application of the violation of superadditivity and weak superadditivity relation proved in this work to a situation where quantum clocks are distributed to multiple parties, providing the optimal strategy for a single party to ensure that the enough amount of asymmetry will be obtained after costly quantum communications.  

It would be interesting to push this approach further and give clearer picture of what kind of property of states contributes to the violation of superadditivity since it would ultimately give insights into the genuine quantum nature of the quantum state in a system with conserved quantity.  
Our results indicate much potential of analyzing information-theoretic quantities from operational perspectives, and resource theories appear to be useful tools for that purpose. 
It would thus be intriguing to extend the analysis to a broader class of the quantities beyond the skew informations as well. 

\vb
--- {\it Note added.} 
Recently, we became aware of an independent related work by I. Marvian and R. W. Spekkens where they showed that no faithful asymmetry monotones can be subadditive or superadditive~\cite{marvian2018nogo}.

\begin{acknowledgements}
 The author thanks Iman Marvian and Kamil Korzekwa for fruitful discussions, Tomohiro Shitara for comments on the manuscript, and Seth Lloyd for making the publication of this work under open access possible. R.T. is supported by NSF, ARO, IARPA, and the Takenaka Scholarship Foundation. 
\end{acknowledgements}

\bibliographystyle{apsrmp4-2}
\bibliography{skew}

\begin{thebibliography}{73}%
\makeatletter
\providecommand \@ifxundefined [1]{%
 \@ifx{#1\undefined}
}%
\providecommand \@ifnum [1]{%
 \ifnum #1\expandafter \@firstoftwo
 \else \expandafter \@secondoftwo
 \fi
}%
\providecommand \@ifx [1]{%
 \ifx #1\expandafter \@firstoftwo
 \else \expandafter \@secondoftwo
 \fi
}%
\providecommand \natexlab [1]{#1}%
\providecommand \enquote  [1]{``#1''}%
\providecommand \bibnamefont  [1]{#1}%
\providecommand \bibfnamefont [1]{#1}%
\providecommand \citenamefont [1]{#1}%
\providecommand \href@noop [0]{\@secondoftwo}%
\providecommand \href [0]{\begingroup \@sanitize@url \@href}%
\providecommand \@href[1]{\@@startlink{#1}\@@href}%
\providecommand \@@href[1]{\endgroup#1\@@endlink}%
\providecommand \@sanitize@url [0]{\catcode `\\12\catcode `\$12\catcode
  `\&12\catcode `\#12\catcode `\^12\catcode `\_12\catcode `\%12\relax}%
\providecommand \@@startlink[1]{}%
\providecommand \@@endlink[0]{}%
\providecommand \url  [0]{\begingroup\@sanitize@url \@url }%
\providecommand \@url [1]{\endgroup\@href {#1}{\urlprefix }}%
\providecommand \urlprefix  [0]{URL }%
\providecommand \Eprint [0]{\href }%
\providecommand \doibase [0]{https://doi.org/}%
\providecommand \selectlanguage [0]{\@gobble}%
\providecommand \bibinfo  [0]{\@secondoftwo}%
\providecommand \bibfield  [0]{\@secondoftwo}%
\providecommand \translation [1]{[#1]}%
\providecommand \BibitemOpen [0]{}%
\providecommand \bibitemStop [0]{}%
\providecommand \bibitemNoStop [0]{.\EOS\space}%
\providecommand \EOS [0]{\spacefactor3000\relax}%
\providecommand \BibitemShut  [1]{\csname bibitem#1\endcsname}%
\let\auto@bib@innerbib\@empty
\bibitem [{\citenamefont {Wigner}(1952)}]{wigner1952physik}%
  \BibitemOpen
  \bibfield  {author} {\bibinfo {author} {\bibfnamefont {E.}~\bibnamefont
  {Wigner}},\ }\emph {\bibinfo {title} {Die Messung Quantenmechanischer
  Operatoren}},\ \href@noop {} {\bibfield  {journal} {\bibinfo  {journal} {Z.
  Physik}\ }\textbf {\bibinfo {volume} {131}},\ \bibinfo {pages} {101}
  (\bibinfo {year} {1952})}\BibitemShut {NoStop}%
\bibitem [{\citenamefont {Araki}\ and\ \citenamefont
  {Yanase}(1960)}]{Araki1960way}%
  \BibitemOpen
  \bibfield  {author} {\bibinfo {author} {\bibfnamefont {H.}~\bibnamefont
  {Araki}}\ and\ \bibinfo {author} {\bibfnamefont {M.~M.}\ \bibnamefont
  {Yanase}},\ }\emph {\bibinfo {title} {Measurement of Quantum Mechanical
  Operators}},\ \href {https://doi.org/10.1103/PhysRev.120.622} {\bibfield
  {journal} {\bibinfo  {journal} {Phys. Rev.}\ }\textbf {\bibinfo {volume}
  {120}},\ \bibinfo {pages} {622--626} (\bibinfo {year} {1960})}\BibitemShut
  {NoStop}%
\bibitem [{\citenamefont {Yanase}(1961)}]{Yanase1961way}%
  \BibitemOpen
  \bibfield  {author} {\bibinfo {author} {\bibfnamefont {M.~M.}\ \bibnamefont
  {Yanase}},\ }\emph {\bibinfo {title} {Optimal Measuring Apparatus}},\ \href
  {https://doi.org/10.1103/PhysRev.123.666} {\bibfield  {journal} {\bibinfo
  {journal} {Phys. Rev.}\ }\textbf {\bibinfo {volume} {123}},\ \bibinfo {pages}
  {666--668} (\bibinfo {year} {1961})}\BibitemShut {NoStop}%
\bibitem [{\citenamefont {Wigner}\ and\ \citenamefont
  {Yanase}(1963)}]{Wigner1963}%
  \BibitemOpen
  \bibfield  {author} {\bibinfo {author} {\bibfnamefont {E.~P.}\ \bibnamefont
  {Wigner}}\ and\ \bibinfo {author} {\bibfnamefont {M.~M.}\ \bibnamefont
  {Yanase}},\ }\emph {\bibinfo {title} {Information contents of
  distribution}},\ \href {https://doi.org/10.1073/pnas.49.6.910} {\bibfield
  {journal} {\bibinfo  {journal} {Proc. Natl. Acad. Sci. U.S.A}\ }\textbf
  {\bibinfo {volume} {49}},\ \bibinfo {pages} {910} (\bibinfo {year}
  {1963})}\BibitemShut {NoStop}%
\bibitem [{\citenamefont {Lieb}(1973)}]{Lieb1973advances}%
  \BibitemOpen
  \bibfield  {author} {\bibinfo {author} {\bibfnamefont {E.~H.}\ \bibnamefont
  {Lieb}},\ }\emph {\bibinfo {title} {Convex trace functions and the
  Wigner-Yanase-Dyson conjecture}},\ \href
  {https://doi.org/https://doi.org/10.1016/0001-8708(73)90011-X} {\bibfield
  {journal} {\bibinfo  {journal} {Adv. Math.}\ }\textbf {\bibinfo {volume}
  {11}},\ \bibinfo {pages} {267 -- 288} (\bibinfo {year} {1973})}\BibitemShut
  {NoStop}%
\bibitem [{\citenamefont {Lieb}\ and\ \citenamefont
  {Ruskai}(1973)}]{Lieb1973prl}%
  \BibitemOpen
  \bibfield  {author} {\bibinfo {author} {\bibfnamefont {E.~H.}\ \bibnamefont
  {Lieb}}\ and\ \bibinfo {author} {\bibfnamefont {M.~B.}\ \bibnamefont
  {Ruskai}},\ }\emph {\bibinfo {title} {A Fundamental Property of
  Quantum-Mechanical Entropy}},\ \href
  {https://doi.org/10.1103/PhysRevLett.30.434} {\bibfield  {journal} {\bibinfo
  {journal} {Phys. Rev. Lett.}\ }\textbf {\bibinfo {volume} {30}},\ \bibinfo
  {pages} {434--436} (\bibinfo {year} {1973})}\BibitemShut {NoStop}%
\bibitem [{\citenamefont {Wehrl}(1978)}]{Wehrl1978entropy}%
  \BibitemOpen
  \bibfield  {author} {\bibinfo {author} {\bibfnamefont {A.}~\bibnamefont
  {Wehrl}},\ }\emph {\bibinfo {title} {General properties of entropy}},\ \href
  {https://doi.org/10.1103/RevModPhys.50.221} {\bibfield  {journal} {\bibinfo
  {journal} {Rev. Mod. Phys.}\ }\textbf {\bibinfo {volume} {50}},\ \bibinfo
  {pages} {221--260} (\bibinfo {year} {1978})}\BibitemShut {NoStop}%
\bibitem [{\citenamefont {Mondal}\ \emph {et~al.}(2016)\citenamefont {Mondal},
  \citenamefont {Datta},\ and\ \citenamefont {Sazim}}]{MONDAL2016speed}%
  \BibitemOpen
  \bibfield  {author} {\bibinfo {author} {\bibfnamefont {D.}~\bibnamefont
  {Mondal}}, \bibinfo {author} {\bibfnamefont {C.}~\bibnamefont {Datta}},\ and\
  \bibinfo {author} {\bibfnamefont {S.}~\bibnamefont {Sazim}},\ }\emph
  {\bibinfo {title} {Quantum coherence sets the quantum speed limit for mixed
  states}},\ \href
  {https://doi.org/https://doi.org/10.1016/j.physleta.2015.12.015} {\bibfield
  {journal} {\bibinfo  {journal} {Phys. Lett. A}\ }\textbf {\bibinfo {volume}
  {380}},\ \bibinfo {pages} {689 -- 695} (\bibinfo {year} {2016})}\BibitemShut
  {NoStop}%
\bibitem [{\citenamefont {Gibilisco}\ and\ \citenamefont
  {Isola}(2003)}]{Gibilisco2003}%
  \BibitemOpen
  \bibfield  {author} {\bibinfo {author} {\bibfnamefont {P.}~\bibnamefont
  {Gibilisco}}\ and\ \bibinfo {author} {\bibfnamefont {T.}~\bibnamefont
  {Isola}},\ }\emph {\bibinfo {title} {Wigner–Yanase information on quantum
  state space: The geometric approach}},\ \href
  {https://doi.org/10.1063/1.1598279} {\bibfield  {journal} {\bibinfo
  {journal} {J. Math. Phys.}\ }\textbf {\bibinfo {volume} {44}},\ \bibinfo
  {pages} {3752--3762} (\bibinfo {year} {2003})}\BibitemShut {NoStop}%
\bibitem [{\citenamefont {Hansen}(2008)}]{Hansen2008}%
  \BibitemOpen
  \bibfield  {author} {\bibinfo {author} {\bibfnamefont {F.}~\bibnamefont
  {Hansen}},\ }\emph {\bibinfo {title} {Metric adjusted skew information}},\
  \href {https://doi.org/10.1073/pnas.0803323105} {\bibfield  {journal}
  {\bibinfo  {journal} {Proc. Natl. Acad. Sci. U.S.A}\ }\textbf {\bibinfo
  {volume} {105}},\ \bibinfo {pages} {9909--9916} (\bibinfo {year}
  {2008})}\BibitemShut {NoStop}%
\bibitem [{\citenamefont {Morozova}\ and\ \citenamefont
  {Chentsov}(1991)}]{Morozova1991}%
  \BibitemOpen
  \bibfield  {author} {\bibinfo {author} {\bibfnamefont {E.~A.}\ \bibnamefont
  {Morozova}}\ and\ \bibinfo {author} {\bibfnamefont {N.~N.}\ \bibnamefont
  {Chentsov}},\ }\emph {\bibinfo {title} {Markov invariant geometry on
  manifolds of states}},\ \href@noop {} {\bibfield  {journal} {\bibinfo
  {journal} {Journal of Soviet Mathematics}\ }\textbf {\bibinfo {volume}
  {56}},\ \bibinfo {pages} {2648--2669} (\bibinfo {year} {1991})}\BibitemShut
  {NoStop}%
\bibitem [{\citenamefont {Petz}(1996)}]{PETZ199681monotone}%
  \BibitemOpen
  \bibfield  {author} {\bibinfo {author} {\bibfnamefont {D.}~\bibnamefont
  {Petz}},\ }\emph {\bibinfo {title} {Monotone metrics on matrix spaces}},\
  \href {https://doi.org/https://doi.org/10.1016/0024-3795(94)00211-8}
  {\bibfield  {journal} {\bibinfo  {journal} {Linear Algebra and its
  Applications}\ }\textbf {\bibinfo {volume} {244}},\ \bibinfo {pages} {81 --
  96} (\bibinfo {year} {1996})}\BibitemShut {NoStop}%
\bibitem [{\citenamefont {Horodecki}\ and\ \citenamefont
  {Oppenheim}(2013)}]{Horodecki2013}%
  \BibitemOpen
  \bibfield  {author} {\bibinfo {author} {\bibfnamefont {M.}~\bibnamefont
  {Horodecki}}\ and\ \bibinfo {author} {\bibfnamefont {J.}~\bibnamefont
  {Oppenheim}},\ }\emph {\bibinfo {title} {(Quantumness in the context of)
  Resource Theories}},\ \href {https://doi.org/10.1142/S0217979213450197}
  {\bibfield  {journal} {\bibinfo  {journal} {Int. J. Mod. Phys. B}\ }\textbf
  {\bibinfo {volume} {27}},\ \bibinfo {pages} {1345019} (\bibinfo {year}
  {2013})}\BibitemShut {NoStop}%
\bibitem [{\citenamefont {Brand{\~a}o}\ and\ \citenamefont
  {Gour}(2015)}]{brandao_2015}%
  \BibitemOpen
  \bibfield  {author} {\bibinfo {author} {\bibfnamefont {F.~G. S.~L.}\
  \bibnamefont {Brand{\~a}o}}\ and\ \bibinfo {author} {\bibfnamefont
  {G.}~\bibnamefont {Gour}},\ }\emph {\bibinfo {title} {Reversible Framework
  for Quantum Resource Theories}},\ \href
  {https://doi.org/10.1103/PhysRevLett.115.070503} {\bibfield  {journal}
  {\bibinfo  {journal} {Phys. Rev. Lett.}\ }\textbf {\bibinfo {volume} {115}},\
  \bibinfo {pages} {070503} (\bibinfo {year} {2015})}\BibitemShut {NoStop}%
\bibitem [{\citenamefont {Liu}\ \emph {et~al.}(2017)\citenamefont {Liu},
  \citenamefont {Hu},\ and\ \citenamefont {Lloyd}}]{Liu2017}%
  \BibitemOpen
  \bibfield  {author} {\bibinfo {author} {\bibfnamefont {Z.-W.}\ \bibnamefont
  {Liu}}, \bibinfo {author} {\bibfnamefont {X.}~\bibnamefont {Hu}},\ and\
  \bibinfo {author} {\bibfnamefont {S.}~\bibnamefont {Lloyd}},\ }\emph
  {\bibinfo {title} {Resource Destroying Maps}},\ \href
  {https://doi.org/10.1103/PhysRevLett.118.060502} {\bibfield  {journal}
  {\bibinfo  {journal} {Phys. Rev. Lett.}\ }\textbf {\bibinfo {volume} {118}},\
  \bibinfo {pages} {060502} (\bibinfo {year} {2017})}\BibitemShut {NoStop}%
\bibitem [{\citenamefont {Gour}(2017)}]{Gour2017}%
  \BibitemOpen
  \bibfield  {author} {\bibinfo {author} {\bibfnamefont {G.}~\bibnamefont
  {Gour}},\ }\emph {\bibinfo {title} {Quantum resource theories in the
  single-shot regime}},\ \href {https://doi.org/10.1103/PhysRevA.95.062314}
  {\bibfield  {journal} {\bibinfo  {journal} {Phys. Rev. A}\ }\textbf {\bibinfo
  {volume} {95}},\ \bibinfo {pages} {062314} (\bibinfo {year}
  {2017})}\BibitemShut {NoStop}%
\bibitem [{\citenamefont {Regula}(2018)}]{regula_2018}%
  \BibitemOpen
  \bibfield  {author} {\bibinfo {author} {\bibfnamefont {B.}~\bibnamefont
  {Regula}},\ }\emph {\bibinfo {title} {Convex Geometry of Quantum Resource
  Quantification}},\ \href {https://doi.org/10.1088/1751-8121/aa9100}
  {\bibfield  {journal} {\bibinfo  {journal} {J. Phys. A: Math. Theor.}\
  }\textbf {\bibinfo {volume} {51}},\ \bibinfo {pages} {045303} (\bibinfo
  {year} {2018})}\BibitemShut {NoStop}%
\bibitem [{\citenamefont {Anshu}\ \emph {et~al.}(2018)\citenamefont {Anshu},
  \citenamefont {Hsieh},\ and\ \citenamefont {Jain}}]{Anshu2018catalytic}%
  \BibitemOpen
  \bibfield  {author} {\bibinfo {author} {\bibfnamefont {A.}~\bibnamefont
  {Anshu}}, \bibinfo {author} {\bibfnamefont {M.-H.}\ \bibnamefont {Hsieh}},\
  and\ \bibinfo {author} {\bibfnamefont {R.}~\bibnamefont {Jain}},\ }\emph
  {\bibinfo {title} {Quantifying Resources in General Resource Theory with
  Catalysts}},\ \href {https://doi.org/10.1103/PhysRevLett.121.190504}
  {\bibfield  {journal} {\bibinfo  {journal} {Phys. Rev. Lett.}\ }\textbf
  {\bibinfo {volume} {121}},\ \bibinfo {pages} {190504} (\bibinfo {year}
  {2018})}\BibitemShut {NoStop}%
\bibitem [{\citenamefont {Lami}\ \emph {et~al.}(2018)\citenamefont {Lami},
  \citenamefont {Regula}, \citenamefont {Wang}, \citenamefont {Nichols},
  \citenamefont {Winter},\ and\ \citenamefont {Adesso}}]{lami_2018}%
  \BibitemOpen
  \bibfield  {author} {\bibinfo {author} {\bibfnamefont {L.}~\bibnamefont
  {Lami}}, \bibinfo {author} {\bibfnamefont {B.}~\bibnamefont {Regula}},
  \bibinfo {author} {\bibfnamefont {X.}~\bibnamefont {Wang}}, \bibinfo {author}
  {\bibfnamefont {R.}~\bibnamefont {Nichols}}, \bibinfo {author} {\bibfnamefont
  {A.}~\bibnamefont {Winter}},\ and\ \bibinfo {author} {\bibfnamefont
  {G.}~\bibnamefont {Adesso}},\ }\emph {\bibinfo {title} {Gaussian Quantum
  Resource Theories}},\ \href {https://doi.org/10.1103/PhysRevA.98.022335}
  {\bibfield  {journal} {\bibinfo  {journal} {Phys. Rev. A}\ }\textbf {\bibinfo
  {volume} {98}},\ \bibinfo {pages} {022335} (\bibinfo {year}
  {2018})}\BibitemShut {NoStop}%
\bibitem [{\citenamefont {Takagi}\ \emph {et~al.}(2019)\citenamefont {Takagi},
  \citenamefont {Regula}, \citenamefont {Bu}, \citenamefont {Liu},\ and\
  \citenamefont {Adesso}}]{takagi2018operational}%
  \BibitemOpen
  \bibfield  {author} {\bibinfo {author} {\bibfnamefont {R.}~\bibnamefont
  {Takagi}}, \bibinfo {author} {\bibfnamefont {B.}~\bibnamefont {Regula}},
  \bibinfo {author} {\bibfnamefont {K.}~\bibnamefont {Bu}}, \bibinfo {author}
  {\bibfnamefont {Z.-W.}\ \bibnamefont {Liu}},\ and\ \bibinfo {author}
  {\bibfnamefont {G.}~\bibnamefont {Adesso}},\ }\emph {\bibinfo {title}
  {Operational Advantage of Quantum Resources in Subchannel Discrimination}},\
  \href {https://doi.org/10.1103/PhysRevLett.122.140402} {\bibfield  {journal}
  {\bibinfo  {journal} {Phys. Rev. Lett.}\ }\textbf {\bibinfo {volume} {122}},\
  \bibinfo {pages} {140402} (\bibinfo {year} {2019})}\BibitemShut {NoStop}%
\bibitem [{\citenamefont {Li}\ \emph {et~al.}(2018)\citenamefont {Li},
  \citenamefont {Bu},\ and\ \citenamefont {Liu}}]{li2018quantifying}%
  \BibitemOpen
  \bibfield  {author} {\bibinfo {author} {\bibfnamefont {L.}~\bibnamefont
  {Li}}, \bibinfo {author} {\bibfnamefont {K.}~\bibnamefont {Bu}},\ and\
  \bibinfo {author} {\bibfnamefont {Z.-W.}\ \bibnamefont {Liu}},\ }\emph
  {\bibinfo {title} {Quantifying the resource content of quantum channels: An
  operational approach}},\ \href@noop {} {\bibfield  {journal} {\bibinfo
  {journal} {arXiv preprint arXiv:1812.02572}\ } (\bibinfo {year}
  {2018})}\BibitemShut {NoStop}%
\bibitem [{\citenamefont {Takagi}\ and\ \citenamefont
  {Regula}(2019)}]{takagi2019general}%
  \BibitemOpen
  \bibfield  {author} {\bibinfo {author} {\bibfnamefont {R.}~\bibnamefont
  {Takagi}}\ and\ \bibinfo {author} {\bibfnamefont {B.}~\bibnamefont
  {Regula}},\ }\emph {\bibinfo {title} {General Resource Theories in Quantum
  Mechanics and Beyond: Operational Characterization via Discrimination
  Tasks}},\ \href {https://doi.org/10.1103/PhysRevX.9.031053} {\bibfield
  {journal} {\bibinfo  {journal} {Phys. Rev. X}\ }\textbf {\bibinfo {volume}
  {9}},\ \bibinfo {pages} {031053} (\bibinfo {year} {2019})}\BibitemShut
  {NoStop}%
\bibitem [{\citenamefont {Liu}\ \emph {et~al.}(2019)\citenamefont {Liu},
  \citenamefont {Bu},\ and\ \citenamefont {Takagi}}]{liu2019one}%
  \BibitemOpen
  \bibfield  {author} {\bibinfo {author} {\bibfnamefont {Z.-W.}\ \bibnamefont
  {Liu}}, \bibinfo {author} {\bibfnamefont {K.}~\bibnamefont {Bu}},\ and\
  \bibinfo {author} {\bibfnamefont {R.}~\bibnamefont {Takagi}},\ }\emph
  {\bibinfo {title} {One-Shot Operational Quantum Resource Theory}},\ \href
  {https://doi.org/10.1103/PhysRevLett.123.020401} {\bibfield  {journal}
  {\bibinfo  {journal} {Phys. Rev. Lett.}\ }\textbf {\bibinfo {volume} {123}},\
  \bibinfo {pages} {020401} (\bibinfo {year} {2019})}\BibitemShut {NoStop}%
\bibitem [{\citenamefont {Plenio}\ and\ \citenamefont
  {Virmani}(2007)}]{plenio2007introduction}%
  \BibitemOpen
  \bibfield  {author} {\bibinfo {author} {\bibfnamefont {M.~B.}\ \bibnamefont
  {Plenio}}\ and\ \bibinfo {author} {\bibfnamefont {S.}~\bibnamefont
  {Virmani}},\ }\emph {\bibinfo {title} {An Introduction to Entanglement
  Measures}},\ \href@noop {} {\bibfield  {journal} {\bibinfo  {journal} {Quant.
  Inf. Comput.}\ }\textbf {\bibinfo {volume} {7}},\ \bibinfo {pages} {001--051}
  (\bibinfo {year} {2007})}\BibitemShut {NoStop}%
\bibitem [{\citenamefont {Horodecki}\ \emph {et~al.}(2009)\citenamefont
  {Horodecki}, \citenamefont {Horodecki}, \citenamefont {Horodecki},\ and\
  \citenamefont {Horodecki}}]{HOrodecki_review2009}%
  \BibitemOpen
  \bibfield  {author} {\bibinfo {author} {\bibfnamefont {R.}~\bibnamefont
  {Horodecki}}, \bibinfo {author} {\bibfnamefont {P.}~\bibnamefont
  {Horodecki}}, \bibinfo {author} {\bibfnamefont {M.}~\bibnamefont
  {Horodecki}},\ and\ \bibinfo {author} {\bibfnamefont {K.}~\bibnamefont
  {Horodecki}},\ }\emph {\bibinfo {title} {Quantum entanglement}},\ \href
  {https://doi.org/10.1103/RevModPhys.81.865} {\bibfield  {journal} {\bibinfo
  {journal} {Rev. Mod. Phys.}\ }\textbf {\bibinfo {volume} {81}},\ \bibinfo
  {pages} {865--942} (\bibinfo {year} {2009})}\BibitemShut {NoStop}%
\bibitem [{\citenamefont {Aberg}(2006)}]{aberg2006quantifying}%
  \BibitemOpen
  \bibfield  {author} {\bibinfo {author} {\bibfnamefont {J.}~\bibnamefont
  {Aberg}},\ }\emph {\bibinfo {title} {Quantifying superposition}},\ \href@noop
  {} {\bibfield  {journal} {\bibinfo  {journal} {arXiv preprint
  quant-ph/0612146}\ } (\bibinfo {year} {2006})}\BibitemShut {NoStop}%
\bibitem [{\citenamefont {Baumgratz}\ \emph {et~al.}(2014)\citenamefont
  {Baumgratz}, \citenamefont {Cramer},\ and\ \citenamefont
  {Plenio}}]{Baumgratz2014}%
  \BibitemOpen
  \bibfield  {author} {\bibinfo {author} {\bibfnamefont {T.}~\bibnamefont
  {Baumgratz}}, \bibinfo {author} {\bibfnamefont {M.}~\bibnamefont {Cramer}},\
  and\ \bibinfo {author} {\bibfnamefont {M.~B.}\ \bibnamefont {Plenio}},\
  }\emph {\bibinfo {title} {Quantifying Coherence}},\ \href
  {https://doi.org/10.1103/PhysRevLett.113.140401} {\bibfield  {journal}
  {\bibinfo  {journal} {Phys. Rev. Lett.}\ }\textbf {\bibinfo {volume} {113}},\
  \bibinfo {pages} {140401} (\bibinfo {year} {2014})}\BibitemShut {NoStop}%
\bibitem [{\citenamefont {Streltsov}\ \emph {et~al.}(2017)\citenamefont
  {Streltsov}, \citenamefont {Adesso},\ and\ \citenamefont
  {Plenio}}]{Streltsov2017}%
  \BibitemOpen
  \bibfield  {author} {\bibinfo {author} {\bibfnamefont {A.}~\bibnamefont
  {Streltsov}}, \bibinfo {author} {\bibfnamefont {G.}~\bibnamefont {Adesso}},\
  and\ \bibinfo {author} {\bibfnamefont {M.~B.}\ \bibnamefont {Plenio}},\
  }\emph {\bibinfo {title} {Colloquium: Quantum coherence as a resource}},\
  \href {https://doi.org/10.1103/RevModPhys.89.041003} {\bibfield  {journal}
  {\bibinfo  {journal} {Rev. Mod. Phys.}\ }\textbf {\bibinfo {volume} {89}},\
  \bibinfo {pages} {041003} (\bibinfo {year} {2017})}\BibitemShut {NoStop}%
\bibitem [{\citenamefont {Gour}\ and\ \citenamefont
  {Spekkens}(2008)}]{Gour2008}%
  \BibitemOpen
  \bibfield  {author} {\bibinfo {author} {\bibfnamefont {G.}~\bibnamefont
  {Gour}}\ and\ \bibinfo {author} {\bibfnamefont {R.~W.}\ \bibnamefont
  {Spekkens}},\ }\emph {\bibinfo {title} {The resource theory of quantum
  reference frames: manipulations and monotones}},\ \href
  {http://stacks.iop.org/1367-2630/10/i=3/a=033023} {\bibfield  {journal}
  {\bibinfo  {journal} {New J. Phys.}\ }\textbf {\bibinfo {volume} {10}},\
  \bibinfo {pages} {033023} (\bibinfo {year} {2008})}\BibitemShut {NoStop}%
\bibitem [{\citenamefont {Marvian}\ and\ \citenamefont
  {Spekkens}(2016)}]{Marvian2016}%
  \BibitemOpen
  \bibfield  {author} {\bibinfo {author} {\bibfnamefont {I.}~\bibnamefont
  {Marvian}}\ and\ \bibinfo {author} {\bibfnamefont {R.~W.}\ \bibnamefont
  {Spekkens}},\ }\emph {\bibinfo {title} {How to quantify coherence:
  Distinguishing speakable and unspeakable notions}},\ \href
  {https://doi.org/10.1103/PhysRevA.94.052324} {\bibfield  {journal} {\bibinfo
  {journal} {Phys. Rev. A}\ }\textbf {\bibinfo {volume} {94}},\ \bibinfo
  {pages} {052324} (\bibinfo {year} {2016})}\BibitemShut {NoStop}%
\bibitem [{\citenamefont {Brand\~ao}\ \emph {et~al.}(2013)\citenamefont
  {Brand\~ao}, \citenamefont {Horodecki}, \citenamefont {Oppenheim},
  \citenamefont {Renes},\ and\ \citenamefont {Spekkens}}]{Brandao2013}%
  \BibitemOpen
  \bibfield  {author} {\bibinfo {author} {\bibfnamefont {F.~G. S.~L.}\
  \bibnamefont {Brand\~ao}}, \bibinfo {author} {\bibfnamefont {M.}~\bibnamefont
  {Horodecki}}, \bibinfo {author} {\bibfnamefont {J.}~\bibnamefont
  {Oppenheim}}, \bibinfo {author} {\bibfnamefont {J.~M.}\ \bibnamefont
  {Renes}},\ and\ \bibinfo {author} {\bibfnamefont {R.~W.}\ \bibnamefont
  {Spekkens}},\ }\emph {\bibinfo {title} {Resource Theory of Quantum States Out
  of Thermal Equilibrium}},\ \href
  {https://doi.org/10.1103/PhysRevLett.111.250404} {\bibfield  {journal}
  {\bibinfo  {journal} {Phys. Rev. Lett.}\ }\textbf {\bibinfo {volume} {111}},\
  \bibinfo {pages} {250404} (\bibinfo {year} {2013})}\BibitemShut {NoStop}%
\bibitem [{\citenamefont {Brand{\~a}o}\ \emph {et~al.}(2015)\citenamefont
  {Brand{\~a}o}, \citenamefont {Horodecki}, \citenamefont {Ng}, \citenamefont
  {Oppenheim},\ and\ \citenamefont {Wehner}}]{Brandao_secondlaws2015}%
  \BibitemOpen
  \bibfield  {author} {\bibinfo {author} {\bibfnamefont {F.}~\bibnamefont
  {Brand{\~a}o}}, \bibinfo {author} {\bibfnamefont {M.}~\bibnamefont
  {Horodecki}}, \bibinfo {author} {\bibfnamefont {N.}~\bibnamefont {Ng}},
  \bibinfo {author} {\bibfnamefont {J.}~\bibnamefont {Oppenheim}},\ and\
  \bibinfo {author} {\bibfnamefont {S.}~\bibnamefont {Wehner}},\ }\emph
  {\bibinfo {title} {The second laws of quantum thermodynamics}},\ \href
  {https://doi.org/10.1073/pnas.1411728112} {\bibfield  {journal} {\bibinfo
  {journal} {Proc. Natl. Acad. Sci. U.S.A}\ }\textbf {\bibinfo {volume}
  {112}},\ \bibinfo {pages} {3275--3279} (\bibinfo {year} {2015})}\BibitemShut
  {NoStop}%
\bibitem [{\citenamefont {Wakakuwa}(2017)}]{wakakuwa2017operational}%
  \BibitemOpen
  \bibfield  {author} {\bibinfo {author} {\bibfnamefont {E.}~\bibnamefont
  {Wakakuwa}},\ }\emph {\bibinfo {title} {Operational resource theory of
  non-markovianity}},\ \href@noop {} {\bibfield  {journal} {\bibinfo  {journal}
  {arXiv preprint arXiv:1709.07248}\ } (\bibinfo {year} {2017})}\BibitemShut
  {NoStop}%
\bibitem [{\citenamefont {Veitch}\ \emph {et~al.}(2014)\citenamefont {Veitch},
  \citenamefont {Mousavian}, \citenamefont {Gottesman},\ and\ \citenamefont
  {Emerson}}]{Veitch2014}%
  \BibitemOpen
  \bibfield  {author} {\bibinfo {author} {\bibfnamefont {V.}~\bibnamefont
  {Veitch}}, \bibinfo {author} {\bibfnamefont {S.~A.~H.}\ \bibnamefont
  {Mousavian}}, \bibinfo {author} {\bibfnamefont {D.}~\bibnamefont
  {Gottesman}},\ and\ \bibinfo {author} {\bibfnamefont {J.}~\bibnamefont
  {Emerson}},\ }\emph {\bibinfo {title} {The resource theory of stabilizer
  quantum computation}},\ \href
  {http://stacks.iop.org/1367-2630/16/i=1/a=013009} {\bibfield  {journal}
  {\bibinfo  {journal} {New J. Phys.}\ }\textbf {\bibinfo {volume} {16}},\
  \bibinfo {pages} {013009} (\bibinfo {year} {2014})}\BibitemShut {NoStop}%
\bibitem [{\citenamefont {Howard}\ and\ \citenamefont
  {Campbell}(2017)}]{howard_2017}%
  \BibitemOpen
  \bibfield  {author} {\bibinfo {author} {\bibfnamefont {M.}~\bibnamefont
  {Howard}}\ and\ \bibinfo {author} {\bibfnamefont {E.}~\bibnamefont
  {Campbell}},\ }\emph {\bibinfo {title} {Application of a {{Resource Theory}}
  for {{Magic States}} to {{Fault}}-{{Tolerant Quantum Computing}}}},\ \href
  {https://doi.org/10.1103/PhysRevLett.118.090501} {\bibfield  {journal}
  {\bibinfo  {journal} {Phys. Rev. Lett.}\ }\textbf {\bibinfo {volume} {118}},\
  \bibinfo {pages} {090501} (\bibinfo {year} {2017})}\BibitemShut {NoStop}%
\bibitem [{\citenamefont {Genoni}\ \emph {et~al.}(2008)\citenamefont {Genoni},
  \citenamefont {Paris},\ and\ \citenamefont {Banaszek}}]{Genoni2008}%
  \BibitemOpen
  \bibfield  {author} {\bibinfo {author} {\bibfnamefont {M.~G.}\ \bibnamefont
  {Genoni}}, \bibinfo {author} {\bibfnamefont {M.~G.~A.}\ \bibnamefont
  {Paris}},\ and\ \bibinfo {author} {\bibfnamefont {K.}~\bibnamefont
  {Banaszek}},\ }\emph {\bibinfo {title} {Quantifying the non-Gaussian
  character of a quantum state by quantum relative entropy}},\ \href
  {https://doi.org/10.1103/PhysRevA.78.060303} {\bibfield  {journal} {\bibinfo
  {journal} {Phys. Rev. A}\ }\textbf {\bibinfo {volume} {78}},\ \bibinfo
  {pages} {060303} (\bibinfo {year} {2008})}\BibitemShut {NoStop}%
\bibitem [{\citenamefont {Takagi}\ and\ \citenamefont
  {Zhuang}(2018)}]{Takagi2018}%
  \BibitemOpen
  \bibfield  {author} {\bibinfo {author} {\bibfnamefont {R.}~\bibnamefont
  {Takagi}}\ and\ \bibinfo {author} {\bibfnamefont {Q.}~\bibnamefont
  {Zhuang}},\ }\emph {\bibinfo {title} {Convex resource theory of
  non-Gaussianity}},\ \href {https://doi.org/10.1103/PhysRevA.97.062337}
  {\bibfield  {journal} {\bibinfo  {journal} {Phys. Rev. A}\ }\textbf {\bibinfo
  {volume} {97}},\ \bibinfo {pages} {062337} (\bibinfo {year}
  {2018})}\BibitemShut {NoStop}%
\bibitem [{\citenamefont {Albarelli}\ \emph {et~al.}(2018)\citenamefont
  {Albarelli}, \citenamefont {Genoni}, \citenamefont {Paris},\ and\
  \citenamefont {Ferraro}}]{albarelli2018resource}%
  \BibitemOpen
  \bibfield  {author} {\bibinfo {author} {\bibfnamefont {F.}~\bibnamefont
  {Albarelli}}, \bibinfo {author} {\bibfnamefont {M.~G.}\ \bibnamefont
  {Genoni}}, \bibinfo {author} {\bibfnamefont {M.~G.~A.}\ \bibnamefont
  {Paris}},\ and\ \bibinfo {author} {\bibfnamefont {A.}~\bibnamefont
  {Ferraro}},\ }\emph {\bibinfo {title} {Resource theory of quantum
  non-Gaussianity and Wigner negativity}},\ \href
  {https://doi.org/10.1103/PhysRevA.98.052350} {\bibfield  {journal} {\bibinfo
  {journal} {Phys. Rev. A}\ }\textbf {\bibinfo {volume} {98}},\ \bibinfo
  {pages} {052350} (\bibinfo {year} {2018})}\BibitemShut {NoStop}%
\bibitem [{\citenamefont {Piani}\ and\ \citenamefont
  {Watrous}(2009)}]{Piani2009}%
  \BibitemOpen
  \bibfield  {author} {\bibinfo {author} {\bibfnamefont {M.}~\bibnamefont
  {Piani}}\ and\ \bibinfo {author} {\bibfnamefont {J.}~\bibnamefont
  {Watrous}},\ }\emph {\bibinfo {title} {All Entangled States are Useful for
  Channel Discrimination}},\ \href
  {https://doi.org/10.1103/PhysRevLett.102.250501} {\bibfield  {journal}
  {\bibinfo  {journal} {Phys. Rev. Lett.}\ }\textbf {\bibinfo {volume} {102}},\
  \bibinfo {pages} {250501} (\bibinfo {year} {2009})}\BibitemShut {NoStop}%
\bibitem [{\citenamefont {Bae}\ and\ \citenamefont
  {Chru{\'s}ci{\'n}ski}(2016)}]{bae_2016}%
  \BibitemOpen
  \bibfield  {author} {\bibinfo {author} {\bibfnamefont {J.}~\bibnamefont
  {Bae}}\ and\ \bibinfo {author} {\bibfnamefont {D.}~\bibnamefont
  {Chru{\'s}ci{\'n}ski}},\ }\emph {\bibinfo {title} {Operational
  {{Characterization}} of {{Divisibility}} of {{Dynamical Maps}}}},\ \href
  {https://doi.org/10.1103/PhysRevLett.117.050403} {\bibfield  {journal}
  {\bibinfo  {journal} {Phys. Rev. Lett.}\ }\textbf {\bibinfo {volume} {117}},\
  \bibinfo {pages} {050403} (\bibinfo {year} {2016})}\BibitemShut {NoStop}%
\bibitem [{\citenamefont {Piani}\ and\ \citenamefont
  {Watrous}(2015)}]{Piani2015}%
  \BibitemOpen
  \bibfield  {author} {\bibinfo {author} {\bibfnamefont {M.}~\bibnamefont
  {Piani}}\ and\ \bibinfo {author} {\bibfnamefont {J.}~\bibnamefont
  {Watrous}},\ }\emph {\bibinfo {title} {Necessary and Sufficient Quantum
  Information Characterization of Einstein-Podolsky-Rosen Steering}},\ \href
  {https://doi.org/10.1103/PhysRevLett.114.060404} {\bibfield  {journal}
  {\bibinfo  {journal} {Phys. Rev. Lett.}\ }\textbf {\bibinfo {volume} {114}},\
  \bibinfo {pages} {060404} (\bibinfo {year} {2015})}\BibitemShut {NoStop}%
\bibitem [{\citenamefont {Napoli}\ \emph {et~al.}(2016)\citenamefont {Napoli},
  \citenamefont {Bromley}, \citenamefont {Cianciaruso}, \citenamefont {Piani},
  \citenamefont {Johnston},\ and\ \citenamefont {Adesso}}]{Napoli2016}%
  \BibitemOpen
  \bibfield  {author} {\bibinfo {author} {\bibfnamefont {C.}~\bibnamefont
  {Napoli}}, \bibinfo {author} {\bibfnamefont {T.~R.}\ \bibnamefont {Bromley}},
  \bibinfo {author} {\bibfnamefont {M.}~\bibnamefont {Cianciaruso}}, \bibinfo
  {author} {\bibfnamefont {M.}~\bibnamefont {Piani}}, \bibinfo {author}
  {\bibfnamefont {N.}~\bibnamefont {Johnston}},\ and\ \bibinfo {author}
  {\bibfnamefont {G.}~\bibnamefont {Adesso}},\ }\emph {\bibinfo {title}
  {Robustness of Coherence: An Operational and Observable Measure of Quantum
  Coherence}},\ \href {https://doi.org/10.1103/PhysRevLett.116.150502}
  {\bibfield  {journal} {\bibinfo  {journal} {Phys. Rev. Lett.}\ }\textbf
  {\bibinfo {volume} {116}},\ \bibinfo {pages} {150502} (\bibinfo {year}
  {2016})}\BibitemShut {NoStop}%
\bibitem [{\citenamefont {Piani}\ \emph {et~al.}(2016)\citenamefont {Piani},
  \citenamefont {Cianciaruso}, \citenamefont {Bromley}, \citenamefont {Napoli},
  \citenamefont {Johnston},\ and\ \citenamefont {Adesso}}]{Piani2016}%
  \BibitemOpen
  \bibfield  {author} {\bibinfo {author} {\bibfnamefont {M.}~\bibnamefont
  {Piani}}, \bibinfo {author} {\bibfnamefont {M.}~\bibnamefont {Cianciaruso}},
  \bibinfo {author} {\bibfnamefont {T.~R.}\ \bibnamefont {Bromley}}, \bibinfo
  {author} {\bibfnamefont {C.}~\bibnamefont {Napoli}}, \bibinfo {author}
  {\bibfnamefont {N.}~\bibnamefont {Johnston}},\ and\ \bibinfo {author}
  {\bibfnamefont {G.}~\bibnamefont {Adesso}},\ }\emph {\bibinfo {title}
  {Robustness of asymmetry and coherence of quantum states}},\ \href
  {https://doi.org/10.1103/PhysRevA.93.042107} {\bibfield  {journal} {\bibinfo
  {journal} {Phys. Rev. A}\ }\textbf {\bibinfo {volume} {93}},\ \bibinfo
  {pages} {042107} (\bibinfo {year} {2016})}\BibitemShut {NoStop}%
\bibitem [{\citenamefont {Bae}\ \emph {et~al.}(2019)\citenamefont {Bae},
  \citenamefont {Chru\ifmmode \acute{s}\else
  \'{s}\fi{}ci\ifmmode~\acute{n}\else \'{n}\fi{}ski},\ and\ \citenamefont
  {Piani}}]{bae2018more}%
  \BibitemOpen
  \bibfield  {author} {\bibinfo {author} {\bibfnamefont {J.}~\bibnamefont
  {Bae}}, \bibinfo {author} {\bibfnamefont {D.}~\bibnamefont {Chru\ifmmode
  \acute{s}\else \'{s}\fi{}ci\ifmmode~\acute{n}\else \'{n}\fi{}ski}},\ and\
  \bibinfo {author} {\bibfnamefont {M.}~\bibnamefont {Piani}},\ }\emph
  {\bibinfo {title} {More Entanglement Implies Higher Performance in Channel
  Discrimination Tasks}},\ \href
  {https://doi.org/10.1103/PhysRevLett.122.140404} {\bibfield  {journal}
  {\bibinfo  {journal} {Phys. Rev. Lett.}\ }\textbf {\bibinfo {volume} {122}},\
  \bibinfo {pages} {140404} (\bibinfo {year} {2019})}\BibitemShut {NoStop}%
\bibitem [{\citenamefont {Giovannetti}\ \emph {et~al.}(2006)\citenamefont
  {Giovannetti}, \citenamefont {Lloyd},\ and\ \citenamefont
  {Maccone}}]{Giovannetti2006}%
  \BibitemOpen
  \bibfield  {author} {\bibinfo {author} {\bibfnamefont {V.}~\bibnamefont
  {Giovannetti}}, \bibinfo {author} {\bibfnamefont {S.}~\bibnamefont {Lloyd}},\
  and\ \bibinfo {author} {\bibfnamefont {L.}~\bibnamefont {Maccone}},\ }\emph
  {\bibinfo {title} {Quantum Metrology}},\ \href
  {https://doi.org/10.1103/PhysRevLett.96.010401} {\bibfield  {journal}
  {\bibinfo  {journal} {Phys. Rev. Lett.}\ }\textbf {\bibinfo {volume} {96}},\
  \bibinfo {pages} {010401} (\bibinfo {year} {2006})}\BibitemShut {NoStop}%
\bibitem [{\citenamefont {Marvian}(2012)}]{Marvian2012}%
  \BibitemOpen
  \bibfield  {author} {\bibinfo {author} {\bibfnamefont {I.}~\bibnamefont
  {Marvian}},\ }\emph {\bibinfo {title} {Symmetry, asymmetry and quantum
  information}},\ \href@noop {} {Ph.D. thesis} (\bibinfo {year}
  {2012})\BibitemShut {NoStop}%
\bibitem [{\citenamefont {Marvian}\ and\ \citenamefont
  {Spekkens}(2014)}]{marvian2014extending}%
  \BibitemOpen
  \bibfield  {author} {\bibinfo {author} {\bibfnamefont {I.}~\bibnamefont
  {Marvian}}\ and\ \bibinfo {author} {\bibfnamefont {R.~W.}\ \bibnamefont
  {Spekkens}},\ }\emph {\bibinfo {title} {Extending Noether's theorem by
  quantifying the asymmetry of quantum states}},\ \href
  {https://doi.org/10.1038/ncomms4821} {\bibfield  {journal} {\bibinfo
  {journal} {Nat. Commun.}\ }\textbf {\bibinfo {volume} {5}},\ \bibinfo {pages}
  {3821} (\bibinfo {year} {2014})}\BibitemShut {NoStop}%
\bibitem [{\citenamefont {Zhang}\ \emph {et~al.}(2017)\citenamefont {Zhang},
  \citenamefont {Yadin}, \citenamefont {Hou}, \citenamefont {Cao},
  \citenamefont {Liu}, \citenamefont {Huang}, \citenamefont {Maity},
  \citenamefont {Vedral}, \citenamefont {Li}, \citenamefont {Guo},\ and\
  \citenamefont {Girolami}}]{Zhang2017}%
  \BibitemOpen
  \bibfield  {author} {\bibinfo {author} {\bibfnamefont {C.}~\bibnamefont
  {Zhang}}, \bibinfo {author} {\bibfnamefont {B.}~\bibnamefont {Yadin}},
  \bibinfo {author} {\bibfnamefont {Z.-B.}\ \bibnamefont {Hou}}, \bibinfo
  {author} {\bibfnamefont {H.}~\bibnamefont {Cao}}, \bibinfo {author}
  {\bibfnamefont {B.-H.}\ \bibnamefont {Liu}}, \bibinfo {author} {\bibfnamefont
  {Y.-F.}\ \bibnamefont {Huang}}, \bibinfo {author} {\bibfnamefont
  {R.}~\bibnamefont {Maity}}, \bibinfo {author} {\bibfnamefont
  {V.}~\bibnamefont {Vedral}}, \bibinfo {author} {\bibfnamefont {C.-F.}\
  \bibnamefont {Li}}, \bibinfo {author} {\bibfnamefont {G.-C.}\ \bibnamefont
  {Guo}},\ and\ \bibinfo {author} {\bibfnamefont {D.}~\bibnamefont
  {Girolami}},\ }\emph {\bibinfo {title} {Detecting metrologically useful
  asymmetry and entanglement by a few local measurements}},\ \href
  {https://doi.org/10.1103/PhysRevA.96.042327} {\bibfield  {journal} {\bibinfo
  {journal} {Phys. Rev. A}\ }\textbf {\bibinfo {volume} {96}},\ \bibinfo
  {pages} {042327} (\bibinfo {year} {2017})}\BibitemShut {NoStop}%
\bibitem [{\citenamefont {Luo}(2003)}]{Luo2003uncertainty}%
  \BibitemOpen
  \bibfield  {author} {\bibinfo {author} {\bibfnamefont {S.}~\bibnamefont
  {Luo}},\ }\emph {\bibinfo {title} {Wigner-Yanase Skew Information and
  Uncertainty Relations}},\ \href
  {https://doi.org/10.1103/PhysRevLett.91.180403} {\bibfield  {journal}
  {\bibinfo  {journal} {Phys. Rev. Lett.}\ }\textbf {\bibinfo {volume} {91}},\
  \bibinfo {pages} {180403} (\bibinfo {year} {2003})}\BibitemShut {NoStop}%
\bibitem [{\citenamefont {Luo}\ and\ \citenamefont
  {Zhang}(2004)}]{luo2004informational}%
  \BibitemOpen
  \bibfield  {author} {\bibinfo {author} {\bibfnamefont {S.}~\bibnamefont
  {Luo}}\ and\ \bibinfo {author} {\bibfnamefont {Z.}~\bibnamefont {Zhang}},\
  }\emph {\bibinfo {title} {An Informational Characterization of
  Schr{\"o}dinger's Uncertainty Relations}},\ \href
  {https://doi.org/10.1023/B:JOSS.0000013971.75667.c8} {\bibfield  {journal}
  {\bibinfo  {journal} {J. Stat. Phys.}\ }\textbf {\bibinfo {volume} {114}},\
  \bibinfo {pages} {1557--1576} (\bibinfo {year} {2004})}\BibitemShut {NoStop}%
\bibitem [{\citenamefont {{Shunlong Luo}}\ and\ \citenamefont {{Qiang
  Zhang}}(2004)}]{luo2004skew}%
  \BibitemOpen
  \bibfield  {author} {\bibinfo {author} {\bibnamefont {{Shunlong Luo}}}\ and\
  \bibinfo {author} {\bibnamefont {{Qiang Zhang}}},\ }\emph {\bibinfo {title}
  {On skew information}},\ \href {https://doi.org/10.1109/TIT.2004.831853}
  {\bibfield  {journal} {\bibinfo  {journal} {IEEE Trans. Inf. Theory}\
  }\textbf {\bibinfo {volume} {50}},\ \bibinfo {pages} {1778--1782} (\bibinfo
  {year} {2004})}\BibitemShut {NoStop}%
\bibitem [{\citenamefont {Hansen}(2007)}]{Hansen2007}%
  \BibitemOpen
  \bibfield  {author} {\bibinfo {author} {\bibfnamefont {F.}~\bibnamefont
  {Hansen}},\ }\emph {\bibinfo {title} {The Wigner-Yanase Entropy is not
  Subadditive}},\ \href {https://doi.org/10.1007/s10955-006-9265-x} {\bibfield
  {journal} {\bibinfo  {journal} {J. Stat. Phys.}\ }\textbf {\bibinfo {volume}
  {126}},\ \bibinfo {pages} {643--648} (\bibinfo {year} {2007})}\BibitemShut
  {NoStop}%
\bibitem [{\citenamefont {Seiringer}(2007)}]{Seiringer2007}%
  \BibitemOpen
  \bibfield  {author} {\bibinfo {author} {\bibfnamefont {R.}~\bibnamefont
  {Seiringer}},\ }\emph {\bibinfo {title} {On the failure of subadditivity of
  the Wigner--Yanase entropy}},\ \href
  {https://doi.org/10.1007/s11005-007-0159-x} {\bibfield  {journal} {\bibinfo
  {journal} {Lett. Math. Phys}\ }\textbf {\bibinfo {volume} {80}},\ \bibinfo
  {pages} {285--288} (\bibinfo {year} {2007})}\BibitemShut {NoStop}%
\bibitem [{\citenamefont {Cai}\ \emph {et~al.}(2008)\citenamefont {Cai},
  \citenamefont {Li},\ and\ \citenamefont {Luo}}]{Cai2008}%
  \BibitemOpen
  \bibfield  {author} {\bibinfo {author} {\bibfnamefont {L.}~\bibnamefont
  {Cai}}, \bibinfo {author} {\bibfnamefont {N.}~\bibnamefont {Li}},\ and\
  \bibinfo {author} {\bibfnamefont {S.}~\bibnamefont {Luo}},\ }\emph {\bibinfo
  {title} {Weak superadditivity of skew information}},\ \href
  {http://stacks.iop.org/1751-8121/41/i=13/a=135301} {\bibfield  {journal}
  {\bibinfo  {journal} {J. Phys. A: Math. Theor.}\ }\textbf {\bibinfo {volume}
  {41}},\ \bibinfo {pages} {135301} (\bibinfo {year} {2008})}\BibitemShut
  {NoStop}%
\bibitem [{\citenamefont {Cai}\ and\ \citenamefont {Hansen}(2010)}]{Cai2010}%
  \BibitemOpen
  \bibfield  {author} {\bibinfo {author} {\bibfnamefont {L.}~\bibnamefont
  {Cai}}\ and\ \bibinfo {author} {\bibfnamefont {F.}~\bibnamefont {Hansen}},\
  }\emph {\bibinfo {title} {Metric-Adjusted Skew Information: Convexity and
  Restricted Forms of Superadditivity}},\ \href
  {https://doi.org/10.1007/s11005-010-0396-2} {\bibfield  {journal} {\bibinfo
  {journal} {Lett. Math. Phys}\ }\textbf {\bibinfo {volume} {93}},\ \bibinfo
  {pages} {1--13} (\bibinfo {year} {2010})}\BibitemShut {NoStop}%
\bibitem [{\citenamefont {Lostaglio}\ \emph
  {et~al.}(2015{\natexlab{a}})\citenamefont {Lostaglio}, \citenamefont
  {Korzekwa}, \citenamefont {Jennings},\ and\ \citenamefont
  {Rudolph}}]{Lostaglio2015}%
  \BibitemOpen
  \bibfield  {author} {\bibinfo {author} {\bibfnamefont {M.}~\bibnamefont
  {Lostaglio}}, \bibinfo {author} {\bibfnamefont {K.}~\bibnamefont {Korzekwa}},
  \bibinfo {author} {\bibfnamefont {D.}~\bibnamefont {Jennings}},\ and\
  \bibinfo {author} {\bibfnamefont {T.}~\bibnamefont {Rudolph}},\ }\emph
  {\bibinfo {title} {Quantum Coherence, Time-Translation Symmetry, and
  Thermodynamics}},\ \href {https://doi.org/10.1103/PhysRevX.5.021001}
  {\bibfield  {journal} {\bibinfo  {journal} {Phys. Rev. X}\ }\textbf {\bibinfo
  {volume} {5}},\ \bibinfo {pages} {021001} (\bibinfo {year}
  {2015}{\natexlab{a}})}\BibitemShut {NoStop}%
\bibitem [{\citenamefont {Lostaglio}\ \emph
  {et~al.}(2015{\natexlab{b}})\citenamefont {Lostaglio}, \citenamefont
  {Jennings},\ and\ \citenamefont {Rudolph}}]{lostaglio2015description}%
  \BibitemOpen
  \bibfield  {author} {\bibinfo {author} {\bibfnamefont {M.}~\bibnamefont
  {Lostaglio}}, \bibinfo {author} {\bibfnamefont {D.}~\bibnamefont
  {Jennings}},\ and\ \bibinfo {author} {\bibfnamefont {T.}~\bibnamefont
  {Rudolph}},\ }\emph {\bibinfo {title} {Description of quantum coherence in
  thermodynamic processes requires constraints beyond free energy}},\ \href
  {https://doi.org/10.1038/ncomms7383} {\bibfield  {journal} {\bibinfo
  {journal} {Nat. Commun.}\ }\textbf {\bibinfo {volume} {6}},\ \bibinfo {pages}
  {6383} (\bibinfo {year} {2015}{\natexlab{b}})}\BibitemShut {NoStop}%
\bibitem [{\citenamefont {Girolami}\ \emph {et~al.}(2013)\citenamefont
  {Girolami}, \citenamefont {Tufarelli},\ and\ \citenamefont
  {Adesso}}]{Girolami2013local}%
  \BibitemOpen
  \bibfield  {author} {\bibinfo {author} {\bibfnamefont {D.}~\bibnamefont
  {Girolami}}, \bibinfo {author} {\bibfnamefont {T.}~\bibnamefont
  {Tufarelli}},\ and\ \bibinfo {author} {\bibfnamefont {G.}~\bibnamefont
  {Adesso}},\ }\emph {\bibinfo {title} {Characterizing Nonclassical
  Correlations via Local Quantum Uncertainty}},\ \href
  {https://doi.org/10.1103/PhysRevLett.110.240402} {\bibfield  {journal}
  {\bibinfo  {journal} {Phys. Rev. Lett.}\ }\textbf {\bibinfo {volume} {110}},\
  \bibinfo {pages} {240402} (\bibinfo {year} {2013})}\BibitemShut {NoStop}%
\bibitem [{\citenamefont {Ye}\ \emph {et~al.}(2017)\citenamefont {Ye},
  \citenamefont {Li}, \citenamefont {Zhao}, \citenamefont {Zhang},\ and\
  \citenamefont {Fei}}]{Ye2017oneway}%
  \BibitemOpen
  \bibfield  {author} {\bibinfo {author} {\bibfnamefont {B.-L.}\ \bibnamefont
  {Ye}}, \bibinfo {author} {\bibfnamefont {B.}~\bibnamefont {Li}}, \bibinfo
  {author} {\bibfnamefont {L.-J.}\ \bibnamefont {Zhao}}, \bibinfo {author}
  {\bibfnamefont {H.-J.}\ \bibnamefont {Zhang}},\ and\ \bibinfo {author}
  {\bibfnamefont {S.-M.}\ \bibnamefont {Fei}},\ }\emph {\bibinfo {title}
  {One-way quantum deficit and quantum coherence in the anisotropic XY
  chain}},\ \href {https://doi.org/10.1007/s11433-016-0425-x} {\bibfield
  {journal} {\bibinfo  {journal} {Science China Physics, Mechanics {\&}
  Astronomy}\ }\textbf {\bibinfo {volume} {60}},\ \bibinfo {pages} {030311}
  (\bibinfo {year} {2017})}\BibitemShut {NoStop}%
\bibitem [{\citenamefont {Sun}\ \emph {et~al.}(2017)\citenamefont {Sun},
  \citenamefont {Mao},\ and\ \citenamefont {Luo}}]{Sun_2017from}%
  \BibitemOpen
  \bibfield  {author} {\bibinfo {author} {\bibfnamefont {Y.}~\bibnamefont
  {Sun}}, \bibinfo {author} {\bibfnamefont {Y.}~\bibnamefont {Mao}},\ and\
  \bibinfo {author} {\bibfnamefont {S.}~\bibnamefont {Luo}},\ }\emph {\bibinfo
  {title} {From quantum coherence to quantum correlations}},\ \href
  {https://doi.org/10.1209/0295-5075/118/60007} {\bibfield  {journal} {\bibinfo
   {journal} {{EPL} (Europhysics Letters)}\ }\textbf {\bibinfo {volume}
  {118}},\ \bibinfo {pages} {60007} (\bibinfo {year} {2017})}\BibitemShut
  {NoStop}%
\bibitem [{\citenamefont {Wang}\ \emph {et~al.}(2019)\citenamefont {Wang},
  \citenamefont {Li}, \citenamefont {Lu},\ and\ \citenamefont
  {Chen}}]{Wang2019lower}%
  \BibitemOpen
  \bibfield  {author} {\bibinfo {author} {\bibfnamefont {S.}~\bibnamefont
  {Wang}}, \bibinfo {author} {\bibfnamefont {H.}~\bibnamefont {Li}}, \bibinfo
  {author} {\bibfnamefont {X.}~\bibnamefont {Lu}},\ and\ \bibinfo {author}
  {\bibfnamefont {B.}~\bibnamefont {Chen}},\ }\emph {\bibinfo {title} {Lower
  bound of local quantum uncertainty for high-dimensional bipartite quantum
  systems}},\ \href {https://doi.org/10.1007/s11433-018-9351-5} {\bibfield
  {journal} {\bibinfo  {journal} {Science China Physics, Mechanics {\&}
  Astronomy}\ }\textbf {\bibinfo {volume} {62}},\ \bibinfo {pages} {990311}
  (\bibinfo {year} {2019})}\BibitemShut {NoStop}%
\bibitem [{\citenamefont {Amari}\ and\ \citenamefont
  {Nagaoka}(2007)}]{amari2007methods}%
  \BibitemOpen
  \bibfield  {author} {\bibinfo {author} {\bibfnamefont {S.-i.}\ \bibnamefont
  {Amari}}\ and\ \bibinfo {author} {\bibfnamefont {H.}~\bibnamefont
  {Nagaoka}},\ }\href@noop {} {\emph {\bibinfo {title} {Methods of information
  geometry}}},\ Vol.\ \bibinfo {volume} {191}\ (\bibinfo  {publisher} {American
  Mathematical Soc.},\ \bibinfo {year} {2007})\BibitemShut {NoStop}%
\bibitem [{\citenamefont {Cencov}(2000)}]{cencov2000statistical}%
  \BibitemOpen
  \bibfield  {author} {\bibinfo {author} {\bibfnamefont {N.~N.}\ \bibnamefont
  {Cencov}},\ }\href@noop {} {\emph {\bibinfo {title} {Statistical decision
  rules and optimal inference}}},\ \bibinfo {number} {53}\ (\bibinfo
  {publisher} {American Mathematical Soc.},\ \bibinfo {year}
  {2000})\BibitemShut {NoStop}%
\bibitem [{\citenamefont {Shitara}\ and\ \citenamefont
  {Ueda}(2016)}]{Shitara2016}%
  \BibitemOpen
  \bibfield  {author} {\bibinfo {author} {\bibfnamefont {T.}~\bibnamefont
  {Shitara}}\ and\ \bibinfo {author} {\bibfnamefont {M.}~\bibnamefont {Ueda}},\
  }\emph {\bibinfo {title} {Determining the continuous family of quantum Fisher
  information from linear-response theory}},\ \href
  {https://doi.org/10.1103/PhysRevA.94.062316} {\bibfield  {journal} {\bibinfo
  {journal} {Phys. Rev. A}\ }\textbf {\bibinfo {volume} {94}},\ \bibinfo
  {pages} {062316} (\bibinfo {year} {2016})}\BibitemShut {NoStop}%
\bibitem [{\citenamefont {Gour}\ \emph {et~al.}(2009)\citenamefont {Gour},
  \citenamefont {Marvian},\ and\ \citenamefont {Spekkens}}]{Gour2009relative}%
  \BibitemOpen
  \bibfield  {author} {\bibinfo {author} {\bibfnamefont {G.}~\bibnamefont
  {Gour}}, \bibinfo {author} {\bibfnamefont {I.}~\bibnamefont {Marvian}},\ and\
  \bibinfo {author} {\bibfnamefont {R.~W.}\ \bibnamefont {Spekkens}},\ }\emph
  {\bibinfo {title} {Measuring the quality of a quantum reference frame: The
  relative entropy of frameness}},\ \href
  {https://doi.org/10.1103/PhysRevA.80.012307} {\bibfield  {journal} {\bibinfo
  {journal} {Phys. Rev. A}\ }\textbf {\bibinfo {volume} {80}},\ \bibinfo
  {pages} {012307} (\bibinfo {year} {2009})}\BibitemShut {NoStop}%
\bibitem [{\citenamefont {Li}\ \emph {et~al.}(2011)\citenamefont {Li},
  \citenamefont {Li}, \citenamefont {Huang},\ and\ \citenamefont
  {Kwek}}]{Li2011}%
  \BibitemOpen
  \bibfield  {author} {\bibinfo {author} {\bibfnamefont {X.}~\bibnamefont
  {Li}}, \bibinfo {author} {\bibfnamefont {D.}~\bibnamefont {Li}}, \bibinfo
  {author} {\bibfnamefont {H.}~\bibnamefont {Huang}},\ and\ \bibinfo {author}
  {\bibfnamefont {L.}~\bibnamefont {Kwek}},\ }\emph {\bibinfo {title} {Averaged
  Wigner-Yanase-Dyson information as a quantum uncertainty measure}},\ \href
  {https://doi.org/10.1140/epjd/e2011-20017-4} {\bibfield  {journal} {\bibinfo
  {journal} {Eur. Phys. J. D}\ }\textbf {\bibinfo {volume} {64}},\ \bibinfo
  {pages} {147} (\bibinfo {year} {2011})}\BibitemShut {NoStop}%
\bibitem [{\citenamefont {Luo}\ and\ \citenamefont
  {Zhang}(2007)}]{Luo2007decreases}%
  \BibitemOpen
  \bibfield  {author} {\bibinfo {author} {\bibfnamefont {S.}~\bibnamefont
  {Luo}}\ and\ \bibinfo {author} {\bibfnamefont {Q.}~\bibnamefont {Zhang}},\
  }\emph {\bibinfo {title} {Skew information decreases under quantum
  measurements}},\ \href {https://doi.org/10.1007/s11232-007-0039-7} {\bibfield
   {journal} {\bibinfo  {journal} {Theor. Math. Phys.}\ }\textbf {\bibinfo
  {volume} {151}},\ \bibinfo {pages} {529--538} (\bibinfo {year}
  {2007})}\BibitemShut {NoStop}%
\bibitem [{\citenamefont {\AA{}berg}(2014)}]{Aberg2014}%
  \BibitemOpen
  \bibfield  {author} {\bibinfo {author} {\bibfnamefont {J.}~\bibnamefont
  {\AA{}berg}},\ }\emph {\bibinfo {title} {Catalytic Coherence}},\ \href
  {https://doi.org/10.1103/PhysRevLett.113.150402} {\bibfield  {journal}
  {\bibinfo  {journal} {Phys. Rev. Lett.}\ }\textbf {\bibinfo {volume} {113}},\
  \bibinfo {pages} {150402} (\bibinfo {year} {2014})}\BibitemShut {NoStop}%
\bibitem [{Note1()}]{Note1}%
  \BibitemOpen
  \bibinfo {note} {For this purpose, one can also consider other covariant
  operations~\cite {Marvian2012,marvian2018nogo}}\BibitemShut {NoStop}%
\bibitem [{\citenamefont {Duan}\ and\ \citenamefont
  {Monroe}(2010)}]{Duan2010qnetwork}%
  \BibitemOpen
  \bibfield  {author} {\bibinfo {author} {\bibfnamefont {L.-M.}\ \bibnamefont
  {Duan}}\ and\ \bibinfo {author} {\bibfnamefont {C.}~\bibnamefont {Monroe}},\
  }\emph {\bibinfo {title} {Colloquium: Quantum networks with trapped ions}},\
  \href {https://doi.org/10.1103/RevModPhys.82.1209} {\bibfield  {journal}
  {\bibinfo  {journal} {Rev. Mod. Phys.}\ }\textbf {\bibinfo {volume} {82}},\
  \bibinfo {pages} {1209--1224} (\bibinfo {year} {2010})}\BibitemShut {NoStop}%
\bibitem [{\citenamefont {Pirker}\ \emph {et~al.}(2018)\citenamefont {Pirker},
  \citenamefont {Wallnöfer},\ and\ \citenamefont {Dür}}]{Pirker2018qnetwork}%
  \BibitemOpen
  \bibfield  {author} {\bibinfo {author} {\bibfnamefont {A.}~\bibnamefont
  {Pirker}}, \bibinfo {author} {\bibfnamefont {J.}~\bibnamefont {Wallnöfer}},\
  and\ \bibinfo {author} {\bibfnamefont {W.}~\bibnamefont {Dür}},\ }\emph
  {\bibinfo {title} {Modular architectures for quantum networks}},\ \href
  {https://doi.org/10.1088/1367-2630/aac2aa} {\bibfield  {journal} {\bibinfo
  {journal} {New. J. Phys.}\ }\textbf {\bibinfo {volume} {20}},\ \bibinfo
  {pages} {053054} (\bibinfo {year} {2018})}\BibitemShut {NoStop}%
\bibitem [{\citenamefont {{Janzing}}\ and\ \citenamefont
  {{Beth}}(2003)}]{Janzing2003clock}%
  \BibitemOpen
  \bibfield  {author} {\bibinfo {author} {\bibfnamefont {D.}~\bibnamefont
  {{Janzing}}}\ and\ \bibinfo {author} {\bibfnamefont {T.}~\bibnamefont
  {{Beth}}},\ }\emph {\bibinfo {title} {Quasi-order of clocks and their
  synchronism and quantum bounds for copying timing information}},\ \href
  {https://doi.org/10.1109/TIT.2002.806162} {\bibfield  {journal} {\bibinfo
  {journal} {IEEE Trans. Inf. Theory}\ }\textbf {\bibinfo {volume} {49}},\
  \bibinfo {pages} {230--240} (\bibinfo {year} {2003})}\BibitemShut {NoStop}%
\bibitem [{\citenamefont {Marvian}\ and\ \citenamefont
  {Spekkens}(2019)}]{marvian2018nogo}%
  \BibitemOpen
  \bibfield  {author} {\bibinfo {author} {\bibfnamefont {I.}~\bibnamefont
  {Marvian}}\ and\ \bibinfo {author} {\bibfnamefont {R.~W.}\ \bibnamefont
  {Spekkens}},\ }\emph {\bibinfo {title} {No-Broadcasting Theorem for Quantum
  Asymmetry and Coherence and a Trade-off Relation for Approximate
  Broadcasting}},\ \href {https://doi.org/10.1103/PhysRevLett.123.020404}
  {\bibfield  {journal} {\bibinfo  {journal} {Phys. Rev. Lett.}\ }\textbf
  {\bibinfo {volume} {123}},\ \bibinfo {pages} {020404} (\bibinfo {year}
  {2019})}\BibitemShut {NoStop}%
\end{thebibliography}%

\end{document}